\newtheoremstyle{mystyle}{}{}{\itshape}{}{\bfseries}{.}{5 pt}{\thmname{#1}\thmnumber{ #2}\thmnote{ {\bfseries(#3)}}}
\theoremstyle{mystyle}
\newtheorem{theorem}{Theorem}
\newtheorem{corollary}[theorem]{Corollary}
\newtheorem{conjecture}{Conjecture}
\newtheorem{metatheorem}{Metatheorem}
\numberwithin{equation}{section} 
\newcommand{\complexityclass}[1]{\textbf{#1}\xspace}
\newcommand{\computproblem}[1]{\textsc{#1}\xspace}
\renewcommand{\P}{\complexityclass{P}}
\newcommand{\NP}{\complexityclass{NP}}
\renewcommand{\L}{\complexityclass{L}}
\newcommand{\NL}{\complexityclass{NL}}
\newcommand{\PSPACE}{\complexityclass{PSPACE}}
\newcommand{\NPSPACE}{\complexityclass{NPSPACE}}
\newcommand{\EXP}{\complexityclass{EXP}}
\newcommand{\PTSAT}{\computproblem{Planar 3-SAT}}
\newcommand{\HAM}{\computproblem{Hamiltonian Cycle}}
\newcommand{\MCV}{\computproblem{Monotone Circuit Value}}
\newcommand{\QBF}{\computproblem{True Quantified Boolean Formula}}
\newcommand{\UCON}{\computproblem{Undirected Connectivity}}
\newcommand{\DCON}{\computproblem{Directed Connectivity}}
\begin{document}

\title{\textbf{Gaming is a hard job, but someone has to do it!}}
\author{
        \textsc{Giovanni Viglietta}\\
        Carleton University, Ottawa, Canada\\
        \normalsize
        \texttt{viglietta@gmail.com}
}
\date{\today} 

\maketitle

\begin{abstract}
We establish some general schemes relating the computational complexity of a video game to the presence of certain common elements or mechanics, such as destroyable paths, collectible items, doors opened by keys or activated by buttons or pressure plates, etc. Then we apply such ``metatheorems'' to several video games published between 1980 and 1998, including Pac-Man, Tron, Lode Runner, Boulder Dash, Deflektor, Mindbender, Pipe Mania, Skweek, Prince of Persia, Lemmings, Doom, Puzzle Bobble~3, and Starcraft. We obtain both new results, and improvements or alternative proofs of previously known results.
\end{abstract}

\section{Introduction}\label{s1}

This work was inspired mainly by the recent papers on the computational complexity of video games by Fori\v sek~\cite{platform} and Cormode~\cite{lemmings}, along with the excellent surveys on related topics by Kendall et al.~\cite{survey1} and Demaine et al.~\cite{survey2,hearn}, and may be regarded as their continuation on the same line of research.

Our purpose is to single out certain recurring features or mechanics in a video game that enable general reduction schemes from known hard problems to the games we are considering. To this end, in Section~\ref{s2} we produce several \emph{metatheorems} that will be applied in Section~\ref{s3} to a wealth of famous commercial video games, in order to automatically establish their hardness with respect to certain computational complexity classes (with a couple of exceptions).

Because most recent commercial games incorporate Turing-equivalent scripting languages that easily allow the design of undecidable puzzles as part of the gameplay, we will focus primarily on older, ``scriptless'' games. Our selection includes games published between 1980 and 1998, presented in alphabetical order for better reference. Not every game will be rigorously explained in all its aspects and details, but at least the game elements that are relevant to our proofs will be introduced, so that any casual player will promptly recognize them and readily understand our constructions.

It is clear that, in order to meaningfully apply the standard computational complexity tools, a suitable \emph{generalization} of each game must be considered. Since classic video games typically include only a finite set of levels, whose complexity is merely a constant, a way must be devised to automatically generate a class of infinitely many new levels of increasing size. Deciding which game elements are ``scalable'' and which are not is ultimately a matter of taste and common sense: when designing a generalization of a well-known game, one should remain as faithful as possible to the feeling and mechanics of the original version. For example, in a typical platform game, the number of platforms and the number of hazards in a level may increase as the level size grows. In contrast, the maximum height of a jump and the enemy AI should remain unchanged, as they are more inherent aspects of the game.

It is generally acknowledged that single-player games that are humanly ``interesting'' to play are complete either for \NP or for \PSPACE (for an introduction to general computational complexity theoretic concepts and classes, refer to~\cite{papadimitriou}). \NP-complete games feature levels whose solution demands some degree of ingenuity, but such levels are usually solved within a polynomial number of ``manipulations'', and the challenge is merely to find them. In contrast, the additional complexity of a \PSPACE-complete game seems to reside in the presence of levels whose solution requires an exponential number of manipulations, and this may be perceived as a nuisance by the player, as it makes for tediously long playing sessions.

Several open problems remain for further research: whenever only the hardness of a game is proved with respect to some complexity class, the obviously implied question is whether the game is also complete for that class. Moreover, different variations of each game may be studied, obtained for instance by further restricting the set of game elements used in our hardness proofs. Indeed, the computational complexity of a game is expected to dramatically drop if some ``critical'' elements are removed from its levels. It is interesting to study the ``complexity spectrum'' of a game, as a function of the game parameters that we set. This has been done to some extent for the game of Lemmings, by different authors, as partly documented in Section~\ref{s3}.

A conference version of this paper has appeared at FUN 2012~\cite{games}.

\section{Metatheorems}\label{s2}

More often than not, games allow the player to control an \emph{avatar}, either directly or indirectly. In some circumstances, an avatar may be identified within the game only through some sort of artifice or abstraction on the game mechanics. Throughout Section~\ref{s2}, we will stipulate that the player's actions involve controlling an avatar, and that the elements of the game may be freely arranged in a plane lattice, or a higher dimensional space. At the very least, the set of game elements includes \emph{walls} that cannot be traversed by the avatar, and can be arranged to form rooms, paths, etc.

In general, a problem instance will be a ``level'' of a given game. The description of a level includes the position of every relevant game element, such as walls, items, the avatar's starting location, etc. The question is always whether or not a given level can be ``solved'' under certain conditions, such as losing no lives, etc. The exact definition of ``solvability'' is highly game-dependent, and can range from reaching an exit location, to collecting some items, to killing some enemies, to surviving for a certain time, etc.

All the \emph{metatheorems} that follow yield hardness results under the assumption that certain game elements are present in a given game. These are not to be intended as ``black boxes'', as regular \emph{theorems} would be, but rather as ``frameworks''. Indeed, we will not always be able to apply the statement of a metatheorem to a particular game without keeping in mind the actual proof of the metatheorem, and the underlying construction enabling the reduction. As it turns out, in order to apply a metatheorem in a non-trivial way, we may need to use certain game elements having very complex behaviors, which serve our purposes only when arranged in some special ways. In order to make sure that our constructions work as intended, we may have to access the full proof of the metatheorem, and exploit some of its features at a ``lower level''. To avoid all this, we would have to strengthen the metatheorems' statements by adding so many details about the actual reduction constructions that most of their appeal would be lost. Here we opt for shorter metatheorem statements, but as a drawback we will have to refer to their proofs from time to time, when invoking them.

\subsection{Location traversal and single-use paths}

A game is said to exhibit the \emph{location traversal} feature if the level designer can somehow force the player's avatar to visit several specific game locations, arbitrarily connected together, in order to beat the level. Although every location must be visited at least once, the avatar may visit them multiple times and in any order. However, the first location is usually fixed (starting location), and sometimes also the last one is (exit location). An example of location traversal is the \emph{collecting items} feature discussed in~\cite{platform}: a certain number of items are scattered across different locations, and the avatar's task is to collect them all.

The \emph{single-use paths} feature is the existence of configurations of game elements that act as paths connecting two locations, which can be traversed by the avatar at most once. A typical example are \emph{breakable tiles}, which disappear as soon as the avatar walks on them.

\begin{metatheorem}\label{m1}
Any game exhibiting both location traversal (with or without a starting location or an exit location) and single-use paths is \NP-hard.
\end{metatheorem}
\begin{proof}
We give a straightforward reduction from \HAM, which is \NP-complete even for undirected 3-regular planar graphs~\cite{ham1,papadimitriou}. Construct a plane embedding of a given 3-regular graph $G$ (perhaps an orthogonal embedding, if needed) with an additional vertex $u$ dangling from a distinguished vertex $v$. Then we convert such embedding into a valid level, by implementing each vertex as a location that must be visited by the avatar, and each edge as a single-use path. The starting location is placed in $v$ and, if an exit location is required, it is placed in $u$.

Clearly, the last vertex the avatar must visit is $u$, because it has only one incident edge. Moreover, each vertex except $v$ can be visited at most once: recall that $G$ is 3-regular, hence reaching a given vertex $w\not \in \{u,v\}$ for the first time implies the consumption of one of its incident edges. Then, leaving $w$ consumes another incident edge, and reaching it a second time consumes the third incident edge. At this point, there is no way for the avatar to leave $w$, and therefore no way to reach the last vertex $u$. As for the starting vertex $v$, the incident edges are initially four, and one is immediately consumed. The second time the avatar reaches $v$, it must necessarily proceed to $u$, for otherwise $u$ would become forever unreachable. It follows that the level is solvable if and only if the player can find a walk starting from $v$, touching every vertex (except $u$) exactly once, reaching $v$ again, and then terminating in $u$. This is possible if and only if $G$ contains a Hamiltonian cycle.
\end{proof}

It is easy to see that \NP-hardness is the best we can achieve given the hypotheses of Metatheorem~\ref{m1}.

\begin{corollary}
There exists an \NP-complete game exhibiting location traversal and single-use paths.
\end{corollary}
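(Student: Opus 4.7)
The plan is to exhibit a very minimal ``toy'' game satisfying the two hypotheses of Metatheorem~\ref{m1}, whose solvability problem is manifestly in \NP. Combining membership with the hardness given by Metatheorem~\ref{m1} yields completeness.

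Concretely, I would define the following game on a rectangular grid: each cell is either a \emph{wall}, an ordinary passable tile, or a \emph{marked} tile, and one passable cell hosts the avatar's starting position. The avatar moves one step at a time in the four cardinal directions, cannot enter walls, and every passable or marked cell it leaves disappears (i.e., becomes a wall). A level is solved if the avatar visits every marked cell at least once. Location traversal is witnessed by the marked cells, and single-use paths are witnessed by the fact that every passable stretch of the grid can be crossed at most once. Thus the game falls under the hypotheses of Metatheorem~\ref{m1} and is therefore \NP-hard.

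For membership in \NP, observe that any legal play visits each cell at most once, since a cell disappears after a single visit. Hence any solution has length at most the number of grid cells, which is polynomial in the input size. A nondeterministic polynomial-time algorithm can therefore guess the sequence of moves and verify in linear time that it is legal and that every marked cell is visited. Combining this with the hardness from Metatheorem~\ref{m1} proves that the game is \NP-complete, which establishes the corollary.

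I do not foresee a real obstacle here: the only mild subtlety is choosing the mechanics so that \emph{every} passable tile is single-use, which is what forces polynomially bounded plays and thus \NP membership. Had we allowed reusable tiles, bounding the solution length would require an additional argument; the design above sidesteps that issue entirely.
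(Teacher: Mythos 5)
Your proof is correct, but it takes a genuinely different route from the paper's. The paper's game $\mathcal G$ is played on an abstract undirected graph in which only \emph{some} distinguished edges are single-use and the rest of the graph remains freely reusable; consequently a winning play may be long, and \NP membership rests on a more careful certificate, namely the injective sequence of distinguished vertices visited and single-use edges consumed (between consecutive events one only needs to check reachability in the residual graph, which is polynomial). You instead make \emph{every} tile of your grid game single-use, which caps the length of any legal play at the number of cells and makes \NP membership immediate by guessing the move sequence. Your design buys a trivial membership argument at the cost of a more restrictive game; the paper's buys a game closer to the general setting of Metatheorem~\ref{m1} (reusable paths coexisting with destroyable ones) at the cost of a subtler certificate. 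One detail you should not gloss over when invoking Metatheorem~\ref{m1} for your game: in that reduction the starting vertex $v$ must be traversed \emph{twice} (once leaving at the start, once returning along the Hamiltonian cycle before exiting to $u$), and in your game even the cells inside a location are consumed. So each vertex region, and in particular $v$, must be implemented as an open area large enough to admit two cell-disjoint crossings between its incident corridors; a single marked cell per vertex would make the level unsolvable even when $G$ is Hamiltonian. This is exactly the kind of ``lower level'' adjustment the paper warns is needed when applying its metatheorems, and it is easily arranged on a grid, but it does need to be said.
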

\begin{proof}
Consider the game $\mathcal G$ played on an undirected graph, in which some distinguished edges implement single-use paths, and some distinguished vertices must be visited by the avatar in order to win. Then, by Metatheorem~\ref{m1} $\mathcal G$ is \NP-hard, while a certificate for $\mathcal G$ is an injective sequence of distinguished vertices and distinguished edges.
\end{proof}

Also notice that both assumptions of Metatheorem~\ref{m1} are required: removing either of them from the above game $\mathcal G$ reduces it to determining if two vertices in a graph are connected, which is solvable in logarithmic time (see~\cite{ucon}).

As Section~\ref{s3} testifies, Metatheorem~\ref{m1} has a wide range of applications, and it tends to yield game levels that are more ``playable'' than those resulting from the somewhat analogous~\cite[Metatheorem~2]{platform}, which rely on a tight time limit to traverse a grid graph. Additionally,~\cite[Metatheorem~2]{platform} is prone to design complications in ``anisotropic'' games, in which the avatar moves at different speeds in different directions, for instance due to gravity effects.

\subsection{Tokens and toll roads}

We consider now another type of game mechanics: \emph{tokens} and \emph{toll roads}. Tokens are items that can be carried by the avatar, and \emph{toll roads} are special paths connecting two locations. Whenever the avatar traverses a toll road, it must ``spend'' a token that it is carrying. If the avatar is carrying no token, then it cannot traverse a toll road.

We distinguish between two types of tokens: \emph{collectible} tokens, which may be placed by the game designer at specific locations and can be picked up by the avatar, and \emph{cumulative} tokens, any number of which can be carried around by the avatar at the same time. Section~\ref{s3} will offer some examples of different types of tokens: for instance, Pac-Man features \emph{power pills}, which may be regarded as collectible tokens that are not cumulative.

\begin{metatheorem}\label{m1b}
A game is \NP-hard if either of the following holds:
\begin{enumerate}
\item[\textbf{\emph{(a)}}] The game features \emph{collectible} tokens, toll roads, and location traversal.
\item[\textbf{\emph{(b)}}] The game features \emph{cumulative} tokens, toll roads, and location traversal.
\item[\textbf{\emph{(c)}}] The game features \emph{collectible cumulative} tokens, toll roads, and the avatar has to reach an exit location.
\end{enumerate}
\end{metatheorem}
\begin{proof}
Once again, we give a reduction from \HAM for all three parts of the metatheorem, varying it slightly depending on our hypotheses. For part~(a), given an undirected 3-regular planar graph $G$, we construct an embedding as described in the proof of Metatheorem~\ref{m1}, and we implement each vertex as a location that has to be traversed by the avatar. Each edge is then implemented as a toll road, and one collectible token is placed in each vertex, except for the final vertex $u$, where we place no token, and the starting vertex $v$, where we place two tokens.

Notice that, if $G$ has $n$ vertices, there are exactly $n+1$ locations that the avatar must visit, and $n+1$ tokens in the level. Therefore, any feasible traversal of the level starts from $v$ and has length at most $n+1$. If the traversal must reach all locations, then at most one location may be visited twice. Moreover, $v$ must be visited at least a second time, because it is the only neighbor of $u$. As a consequence, a valid traversal of the level must start from $v$, visit every other location except $u$ exactly once, return in $v$, and end in $u$. It follows that, if $G$ has no Hamiltonian cycle, then the level is unsolvable.

Conversely, let us assume that $G$ has a Hamiltonian cycle, and let us show that the level is solvable. The avatar can traverse $G$, starting from $v$ and ending in $v$ again, along a Hamiltonian cycle, and finally it can reach $u$ and solve the level. This traversal is valid even if tokens are not cumulative: upon reaching a new location, the avatar collects one new token and immediately spends it in a toll road. Likewise, when $v$ is reached for the second time, the second token is collected, and it is immediately spent to reach $u$.

The construction for part~(b) is the same, but instead of scattering $n+1$ tokens throughout the level (where $n$ is the number of vertices of $G$), we assume that the avatar already carries $n+1$ tokens as the game starts. Then a similar reasoning applies: exactly one location may be visited twice, which must be $v$ because it is the starting location and the only neighbor of $u$.  Therefore, $u$ must be the last location to be visited, and the level is solvable if and only if $G$ has a Hamiltonian cycle.

For part~(c), we further modify the previous proof as follows: we construct the same embedding of $G$, and we place two tokens in every location, except in $u$, where we place no token. Then we implement each edge as a toll road, except the edge between $v$ and $u$, which is implemented as a sequence of $n$ toll roads. The starting location is $v$ again, and the exit location is $u$. The avatar carries no token as the game starts.

There are $2n$ tokens in the level, and $n$ of them must be used to travel from $v$ to $u$, so at most $n$ more tokens may be spent in other toll roads. Every time a toll road is traversed, one token is gained if a new location is reached (one token is spent and two are found), and one token is lost if an already visited location is reached. It follows that the player must find a walk in $G$ that starts and ends in $v$, traverses at most $n$ edges and visits $n$ different vertices. This is equivalent to finding a Hamiltonian cycle in $G$. (Observe that the location traversal feature has been obtained here as a by-product of our construction, without being an explicit requirement.)
\end{proof}

\NP-hardness is the best complexity achievable under the hypotheses of Metatheorem~\ref{m1b}, in each of the three cases.

\begin{corollary}
There exists an \NP-complete game featuring collectible cumulative tokens and toll roads, in which the avatar has to reach an exit location.
\end{corollary}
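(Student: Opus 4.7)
The plan is to define a natural graph-based game $\mathcal{G}'$ whose input explicitly exposes the features required by Metatheorem~\ref{m1b}(c), then obtain \NP-hardness for free from that metatheorem and complete the argument by exhibiting a polynomial-size certificate.

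Concretely, I would take $\mathcal{G}'$ to be the following: an instance is an undirected graph in which each edge is labeled either ``free'' or ``toll'', each vertex carries a nonnegative integer number of collectible tokens (bounded, say in unary, so the total number of tokens $T$ is polynomial in the input size), together with a starting vertex $s$, an initial token supply carried by the avatar, and an exit vertex $t$. A play is a walk starting at $s$; whenever the avatar first enters a vertex it picks up all tokens stored there, each traversal of a toll edge consumes one carried token, and tokens are cumulative. The instance is a yes-instance iff some play ends at $t$.

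For hardness, I would invoke Metatheorem~\ref{m1b}(c) directly: the reduction constructed there from \HAM{} produces exactly an instance of $\mathcal{G}'$ (a planar graph with two tokens on each non-terminal vertex, toll-road edges, an avatar initially empty at $v$, and exit $u$), so $\mathcal{G}'$ is \NP-hard.

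For membership in \NP, the certificate I would propose is the sequence of toll-edge traversals performed by a solving play, together with, for each consecutive pair of toll traversals, the free-edge path used to connect them. The crucial bound is that any solving play uses at most $T$ toll-edge traversals, since each costs a token and there are at most $T$ tokens in the entire level (including the initial supply). Between two consecutive toll traversals the avatar only walks along free edges and collects tokens; any such sub-walk can be replaced by a simple path in the free-edge subgraph that visits the same set of token-bearing vertices, and since the underlying graph has only polynomially many vertices, the total certificate length stays polynomial. A verifier can then simulate the play step-by-step in polynomial time, checking edge existence, token-balance nonnegativity at each toll crossing, and arrival at $t$.

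The main obstacle I anticipate is the certificate-shortening argument: I need to justify that it is safe to assume the avatar never performs a ``useless'' detour between toll crossings, i.e.\ that free-edge excursions can be compressed without destroying solvability. This is immediate once one observes that free edges neither consume nor produce resources and tokens are never lost by not picking them up later, so replacing any sub-walk between two toll crossings by a simple path visiting a superset of the token-bearing vertices it originally touched preserves feasibility. With this observation, the polynomial certificate length, and hence \NP-membership, follows; combined with the hardness from Metatheorem~\ref{m1b}(c), the game $\mathcal{G}'$ is \NP-complete.
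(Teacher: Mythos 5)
Your proposal is correct and follows the same overall strategy as the paper: define the game directly on a graph so that Metatheorem~\ref{m1b}.c supplies hardness, then exhibit a polynomial certificate by bounding the number of toll-road crossings and observing that free movement is harmless. The one genuine difference is in how that bound is obtained. The paper's certificate is an \emph{injective sequence of toll roads}: it argues that, since the avatar may as well greedily collect every token reachable without paying a toll, no toll road ever needs to be traversed twice, so the certificate length is bounded by the number of edges and no assumption on the token encoding is needed. You instead bound the number of crossings by the total token count $T$, which forces you to stipulate a unary encoding of the tokens but is arguably the more robust argument: the paper's injectivity claim is delicate (one can cook up instances where a toll road must be crossed once in each direction to ferry tokens out of a dead-end component), whereas ``each crossing costs a token, and only $T$ tokens exist'' is airtight. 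One small wrinkle in your write-up: a single simple path between two consecutive toll crossings cannot in general be made to visit a prescribed set of token-bearing vertices, so you should either let the inter-crossing segment be a concatenation of polynomially many simple detours, or (cleaner, and closer to the paper) drop the explicit paths entirely and have the verifier compute the freely-reachable component after each crossing and credit all of its tokens at once. With that adjustment the argument goes through.
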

\begin{proof}
Consider the game played on a graph in which some distinguished edges implement toll roads, each vertex may contain some collectible cumulative tokens, and one distinguished vertex is the exit location. Indeed, a certificate for this game is simply an injective sequence of toll roads, because we may assume that the avatar always collects all the tokens it can reach without traversing toll roads, and therefore no toll read ever has to be traversed twice.
\end{proof}

\subsection{Doors and keys}

A \emph{door} is a game element that can be open or closed, and may be traversed by the avatar if and only if it is open. A \emph{key} is a type of token that can be used by the avatar to open a closed door, upon contact. Any key can open any door, but a key is ``consumed'' as soon as it is used. Hence, the key-door paradigm is somewhat similar to the token-toll road one, with the difference that a door opened by a key remains open and can be traversed several times afterwards without consuming new keys.

We distinguish again between \emph{collectible} keys, which can be found by the avatar and picked up, and \emph{cumulative} keys, any number of which can be carried at the same time. Many examples of keys are found in platform games and adventure games. In Section~\ref{s3}, we will show how the Lemmings game features cumulative keys that are not collectible, although this will be established through non-trivial abstractions on the game mechanics.

To state the next result, which is an analogous of Metatheorem~\ref{m1b} for the key-door paradigm, we further need to introduce the concept of \emph{one-way path}, which is a path that can be traversed by the avatar in one specific direction only.

\begin{metatheorem}\label{m1c}
A game is \NP-hard if it contains doors and one-way paths, and either of the following holds:
\begin{enumerate}
\item[\textbf{\emph{(a)}}] The game features \emph{collectible} keys and location traversal.
\item[\textbf{\emph{(b)}}] The game features \emph{cumulative} keys and location traversal.
\item[\textbf{\emph{(c)}}] The game features \emph{collectible cumulative} keys and the avatar has to reach an exit location.
\end{enumerate}
\end{metatheorem}
\begin{proof}
We reduce from \HAM, which is \NP-complete even for directed planar graphs whose vertices have one incoming edge and two outgoing edges, or two incoming edges and one outgoing edge~\cite{ham2}. All three parts of our proof are based on the same construction: given one such directed graph $G$ on $n$ vertices, we pick a vertex $v$ with indegree two and outdegree one (which exists), and we attach to it a new outgoing edge, ending in a new vertex $u$. Then we construct a plane embedding of this graph (maybe an orthogonal embedding), substituting each vertex with a game location, and each directed edge with a one-way path. $v$ will be the avatar's starting location and, if an exit location is required, it is placed in $u$ ($u$ must be the final location anyway, because it has no outgoing edges). Moreover, we place a closed door in each one-way path, except in the path between $v$ and $u$.

To prove part~(a), place one key in each location, except $u$, and assume that the avatar must traverse every location (the last of which would be $u$). After collecting the first key in $v$, every time a new location (except $u$) is reached, one key is used to open a door, and one new key is found. On the other hand, as soon as an already visited location $w$ is reached, the only key in the avatar's possession is lost and no key is found, so afterwards the avatar is bound to traverse only paths with no door (hence $(v,u)$) or with an already opened door. As a consequence, the level is solved if and only if $w=v$ and every location except $u$ has been visited, which is possible if and only if $G$ has a Hamiltonian cycle.

For part~(b), we put no keys in the level, but we assume that the avatar already carries $n$ keys as the game starts. We assumed that each location must be visited, and therefore at least two of its incident paths' doors must be opened (unless the location is $u$). Hence, $n$ doors must be opened in total, and all the keys must be used. On the other hand, if all the three incident paths' doors of a location are opened, and all the locations are visited, a straightforward double counting argument shows that at least $n+1$ keys have been used, which is unfeasible. Therefore, the avatar must follow a Hamiltonian cycle of $G$ starting and ending in $v$, and then visit $u$.

Finally, for part~(c), we place two keys in each location, except in $u$, and we place $n$ doors in the path between $v$ and $u$. No key is carried by the avatar as the game starts. Hence, the avatar must visit some locations to collect keys, return to $v$ with at least $n$ keys, and reach the exit in $u$. The same double counting argument used for part~(b) reveals that, if $k\leqslant n$ distinct locations are visited before reaching $u$, then at at least $k$ doors must be opened. In particular, exactly $k$ doors are opened if and only if a cycle of $G$ is followed. Because visiting $k$ distinct locations allows to collect exactly $2k$ keys, the only way to return in $v$ with $n$ keys is to follow a cycle of length $k=n$, i.e., a Hamiltonian cycle.
\end{proof}

\begin{corollary}
There exists an \NP-complete game featuring doors, collectible cumulative keys, one-way paths, location traversal, in which the avatar has to reach an exit location.
\end{corollary}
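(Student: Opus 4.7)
The plan is to mirror the approach of the previous two corollaries: define a natural game played directly on a graph that embodies all the required features, invoke Metatheorem~\ref{m1c} for \NP-hardness, and then exhibit a polynomial-size certificate to place the game in \NP.

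Concretely, I would consider the game played on a directed graph whose edges are one-way paths, some of which contain doors that are initially closed, whose vertices may contain some collectible cumulative keys and may be designated as distinguished locations to visit, with one vertex designated as the starting location and another as the exit location. \NP-hardness is then immediate from Metatheorem~\ref{m1c}(a), since collectible cumulative keys are in particular collectible and all the other hypotheses of part~(a) are satisfied. My candidate certificate is the injective sequence of doors in the order they are unlocked; verification proceeds by simulating the avatar greedily between consecutive unlocking events, computing the set of vertices reachable via currently traversable edges, collecting every key and marking every distinguished location therein, and then checking that the entry point of the next door to be unlocked is reachable and that the avatar is carrying at least one key. Since a door, once unlocked, remains open forever, the certificate has length at most the number of doors in the level, and verification is clearly polynomial-time.

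The main subtlety I expect to confront is the one-way nature of the paths: because the avatar cannot in general backtrack, its current position matters, not merely the accumulated reachable set. I would therefore need to argue that greedily collecting every key and visiting every distinguished location within the current reachable set is without loss of generality, so that anything reachable at an early stage is picked up or visited at that stage, and nothing is missed when the avatar is later ``pushed forward'' through a one-way path beyond which some vertices become inaccessible. This should follow by a straightforward induction on the sequence of unlocking events, after which the proof is complete.
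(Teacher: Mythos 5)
There is a genuine gap, and it sits exactly at the point you flagged and then waved away. Your certificate records only the sequence of doors opened, and your verifier, between consecutive openings, computes the set $R$ of vertices reachable from the current position via currently traversable edges and declares every key and every distinguished location in $R$ to be collected or visited. With one-way paths this is not sound: $R$ being reachable vertex-by-vertex does not mean a \emph{single} walk can cover the relevant part of $R$ and still end where it needs to. Concretely, take a level with no doors at all, a one-way path from the start to a dead-end vertex that is a distinguished location, and a disjoint one-way path from the start to the exit. Your verifier accepts the empty certificate (the distinguished location and the exit both lie in $R$), yet no actual play-through visits both. The same phenomenon breaks the key bookkeeping: a key may lie down a one-way branch from which the next door's entry point is unreachable, so ``collect every key in $R$'' overestimates the number of keys the avatar can actually be carrying. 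So the claimed ``straightforward induction'' does not exist; greedy coverage of the reachable set is simply not without loss of generality once paths are directed.

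The paper sidesteps this by putting more information into the certificate: an injective \emph{interleaved} sequence of distinguished vertices, key-bearing vertices, and door edges, i.e., the order in which all these events occur along the avatar's walk. This is still polynomial, because each distinguished vertex need only be first-visited once, all keys at a vertex may be taken on first visit (they are cumulative), and an opened door never needs opening again. Verification then reduces to a chain of single-source reachability checks between consecutive events in the certificate, plus counting keys when a door event occurs, and soundness is immediate because consecutive reachability certifies an actual walk. If you want to salvage your leaner certificate you would have to show that, in each phase, deciding which keys and locations a single walk can pick up while ending at the next door is polynomial; that is a nontrivial (and unnecessary) detour. Your \NP-hardness half via Metatheorem~\ref{m1c} is fine.
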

\begin{proof}
Consider the game played on a graph in which some edges may implement doors or one-way paths, some distinguished vertices must be traversed by the avatar, some vertices contain collectible cumulative keys, and one vertex is the exit location. A certificate for this game is an injective sequence of distinguished vertices, vertices containing keys, and edges containg doors. Indeed, all the keys contained in a vertex can be taken as soon as the vertex is reached, and any open door becomes a regular path and does not have to be opened a second time. Hence the game is in \NP, and by Metatheorem~\ref{m1c} it is \NP-complete.
\end{proof}

\subsection{Doors and pressure plates}

There are other ways to modify a door's status, such as pushing a \emph{pressure plate}. A pressure plate is a floor button that is operated whenever the avatar steps on it, and its effect may be either the opening or the closure of a specific door. Each pressure plate is connected to just one door, and each door may be controlled by at most two pressure plates (one opens it, one closes it). Of course, all our hardness results will hold in the more general scenario in which any number of doors is controlled by the same pressure plate, or any number of pressure plates control the same door.

In~\cite[Metatheorem~3]{platform}, Fori\v sek shows (with a different terminology) that a game is \NP-hard if the avatar has to reach an exit location, and the game elements include one-way paths, doors and pressure plates (or 1-buttons, see the next subsection) that can open doors. In the following metatheorm, we further explore the capabilities of pressure plates.

We say that a game allows \emph{crossovers} if there is a way to prevent the avatar from switching between two crossing paths. Some 2-dimensional games natively implement crossovers through bridges or tunnels. In some other games, crossovers can be simulated through more complicated gadgets.

\begin{metatheorem}\label{m2}
If a game features doors and pressure plates, and the avatar has to reach an exit location in order to win, then:
\begin{enumerate}
\item[\textbf{\emph{(a)}}] Even if no door can be closed by a pressure plate, and if crossovers are allowed, then the game is \P-hard.
\item[\textbf{\emph{(b)}}] Even if no two pressure plates control the same door, the game is \NP-hard.
\item[\textbf{\emph{(c)}}] If each door may be controlled by two pressure plates, then the game is \PSPACE-hard.
\end{enumerate}
\end{metatheorem}
\begin{proof}
To prove part~(a), we give a \L-reduction from \MCV \cite{papadimitriou}. OR and AND gates are implemented as in Figures~\ref{f1a} and~\ref{f1b}, the starting location is connected to all true input literals, and the exit is located on the output. It is easy to check that the output of an OR gate can be reached by the avatar if and only if at least one of its two input branches is. Similarly, both input branches of an AND gate must be reached by the avatar in order for doors $a$ and $b$ to be opened and allow access to the output. The doors $x$ and $y$ in the AND gate prevent the avatar from walking from one input branch to the other through the center of the gate, in case only one input branch is reachable. Clearly, the exit is eventually reachable if and only if the output of the circuit is true.

\begin{figure}[htbp]
\centering
\subfloat[OR gate.]{
	\label{f1a}
	\includegraphics[scale=2.5]{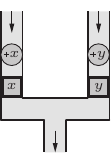}}\qquad\qquad
\subfloat[AND gate.]{
	\label{f1b}
	\includegraphics[scale=2.5]{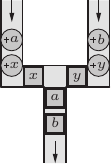}}\qquad\qquad
\subfloat[Single-use path.]{
	\label{f1c}
	\includegraphics[scale=2.5]{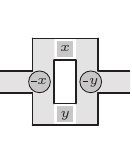}}
\caption{Several gadgets implemented with doors and pressure plates. A pressure plate that is labeled $+x$ (respectively,~$-x$) opens (respectively,~closes) the unique door labeled $x$, and a door is initially open (respectively,~closed) if its border is white (respectively,~black).}
\label{f1}
\end{figure}

For part~(b), observe that we can implement single-use paths as shown in Figure~\ref{f1c}: in order to traverse the gadget, the avatar must walk on both pressure plates, thus permanently closing both doors. Since we can also enforce location traversal by blocking the exit with several closed doors, which may be opened via as many pressure plates positioned in every location, we may indeed invoke Metatheorem~\ref{m1}.

Finally, to prove~(c), we implement a reduction framework from \QBF, sketched in Figure~\ref{f2}. A given fully quantified Boolean formula $\exists x \forall y \exists z \cdots \varphi(x, y, z, \cdots)$, where $\varphi$ is in 3-CNF, is translated into a row of \emph{Quantifier gadgets}, followed by a row of \emph{Clause gadgets}, connected by several paths.

\begin{figure}[htbp]
\centering
\includegraphics[width=\linewidth]{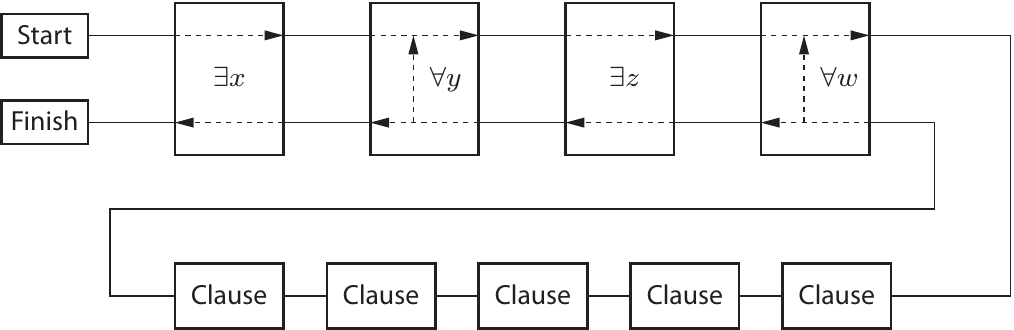}
\caption{Sketch of the PSPACE-hardness framework.}
\label{f2}
\end{figure}

Traversing a Quantifier gadget at any time sets the truth value of the corresponding Boolean variable. On the other hand, each Clause gadget can be traversed if and only if the corresponding clause of $\varphi$ is satisfied by the current variable assignments. Whenever traversing an Existential Quantifier gadget, the player can choose the truth value of the corresponding variable. On the other hand, the first time a Universal Quantifier gadget is traversed, the corresponding variable is set to true.

When all variables are set, the player attempts to traverse the Clause gadgets. If the player succeeds, he proceeds to the ``lower parts'' of the Quantifier gadgets, where he is rerouted to the last Universal Quantifier gadget in the sequence. The corresponding variable is then set to false, and $\varphi$ is ``evaluated'' again by making the player walk through all the Clause gadgets.

The process continues, forcing the player to ``backtrack'' several times, setting all possible combinations of truth values for the universally quantified variables, and choosing appropriate values for the existentially quantified variables in the attempt to satisfy $\varphi$.

Finally, when all the necessary variable assignments have been tested and $\varphi$ keeps being satisfied, i.e., if the overall quantified Boolean formula is true, the exit becomes accessible, and the player may finish the level. Conversely, if the quantified Boolean formula is false, there is no way for the player to operate doors in order to reach the exit.

Next we show how to implement all the components of our framework using just doors and pressure plates.

Clause gadgets are straightforwardly implemented, as shown in Figure~\ref{f3}. There is a door for each literal in the clause, and the avatar may traverse the clause if and only if at least one of the doors is open.

\begin{figure}[htbp]
\centering
\includegraphics[scale=2.5]{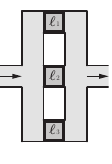}
\caption{Clause gadget implemented with doors.}
\label{f3}
\end{figure}

The Existential Quantifier gadget for variable $x$ is illustrated in Figure~\ref{f4}. $x_1$, $x_2$, etc.\ (respectively, $\overline{x_1}$, $\overline{ x_2}$, etc.) denote the positive  (respectively, negative) occurrences of $x$ in the clauses of $\varphi$.

\begin{figure}[htbp]
\centering
\includegraphics[scale=2.5]{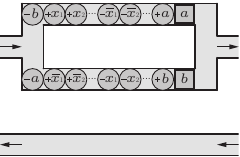}
\caption{Existential Quantifier gadget implemented with doors and pressure plates. $x_i$ (respectively, $\overline{x_i}$) denotes the $i$-th occurrence of literal $x$ (respectively, $\neg x$) in $\varphi$.}
\label{f4}
\end{figure}

When traversing the upper part of the gadget from left to right, the player must choose one of the two paths, thus setting the truth value of $x$ to either true or false. This is done by appropriately opening or closing all the doors corresponding to occurrences of $x$ in $\varphi$. The doors labeled $a$ and $b$ prevent leakage between the two different paths of the Existential Quantifier gadget, enforcing mutual exclusion.

Finally, the lower part of the gadget is traversed from right to left when the player backtracks, and it is simply a straight path.

A Universal Quantifier gadget for variable $x$ is shown in Figure~\ref{f5}.

\begin{figure}[htbp]
\centering
\includegraphics[scale=2.5]{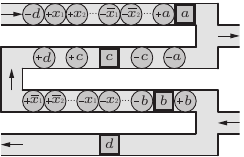}
\caption{Universal Quantifier gadget implemented with doors and pressure plates.}
\label{f5}
\end{figure}

When the avatar enters the gadget from the top left, door $d$ gets closed and variable $x$ is set to true. Then the avatar must exit to the top right, because door $c$ cannot be traversed from right to left.

When backtracking the first time, the avatar enters from the bottom right and, because door $d$ is still closed, it must take the upper path, thus setting variable $x$ to false. Incidentally, door $d$ gets opened and door $a$ gets closed, thus preventing leakage to the top left entrance, and forcing the avatar to exit to the top right again.

When backtracking the second time (i.e., when both truth values of $x$ have been tested), door $d$ is open and the avatar may finally exit to the bottom left. When done backtracking, the avatar will eventually enter this gadget again from the top left, setting $x$ to true again, etc. 

We note that, as a result of our constructions, each door is operated by exactly two pressure plates. For instance, the door labeled $x_1$, located in some Clause gadget, is opened and closed by exactly two pressure plates, both located in the Quantifier gadget corresponding to variable $x$.
\end{proof}

Observe that our Metatheorem~\ref{m2}.c is an improvement on~\cite[Metatheorem~4]{platform}, in that the \emph{long fall} feature (and thus the concept of gravity) is not used, and it works with a more restrictive model of doors: in~\cite{platform}, arbitrarily many pressure plates can act on the same door, while we allow just two.

As with previous metatheorems, we can prove that Metatheorem~\ref{m2}'s statement is the best possible given its hypotheses.

\begin{corollary}\label{cor2}
There exist games $\mathcal G_1$, $\mathcal G_2$, $\mathcal G_3$, featuring doors and pressure plates, in which the avatar has to reach an exit location, such that:
\begin{enumerate}
\item[\emph{(a)}] In $\mathcal G_1$, pressure plates can only open doors, crossovers are allowed, and $\mathcal G_1$ is \P-complete.
\item[\emph{(b)}] In $\mathcal G_2$, no two pressure plates control the same door, and $\mathcal G_2$ is \NP-complete.
\item[\emph{(c)}] In $\mathcal G_3$, each door may be controlled by two pressure plates, and $\mathcal G_3$ is \PSPACE-complete.
\end{enumerate}
\end{corollary}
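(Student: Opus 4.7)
The plan is to invoke Metatheorem~\ref{m2} for the hardness direction in each of the three parts, so only membership in \P, \NP, and \PSPACE needs to be established. For each $\mathcal G_i$ I would take the natural canonical game played on a graph whose edges may carry doors with prescribed initial open/closed status, and whose vertices may carry pressure plates that either open or close a specified door, with a designated starting vertex and exit vertex.

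For part~(a), the key observation is monotonicity: since every plate can only \emph{open} a door and no door is ever closed, the set of open doors grows over time, and once the avatar can ever reach a plate the corresponding door is effectively open forever afterward. I would therefore compute the outcome by a straightforward fixed-point iteration: starting from the initial set of open doors, (i)~compute the set $S$ of vertices reachable from the start via currently-open doors, (ii)~open every door whose controlling plate lies in $S$, and repeat. Each round either opens a previously-closed door or halts, so the loop runs at most polynomially many times, with each round using polynomial-time reachability in a graph (crossovers are allowed, so planarity is not an issue). The level is solvable iff the exit lies in the final $S$, placing $\mathcal G_1$ in \P.

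For part~(b), the decisive fact is that every door is governed by at most one pressure plate, which is hard-wired to either open or close that door. Consequently a plate has a nontrivial effect at most the first time the avatar steps on it in the relevant state: subsequent visits leave the door in the state the plate enforces. Thus, across any legal play, the door configuration changes at most a polynomial number of times. I would take as an \NP-certificate the ordered list of these ``effective activations'', and verify it by running polynomial-time reachability between consecutive plates under the current door configuration, finally checking reachability of the exit. The certificate has polynomial length, giving $\mathcal G_2\in\NP$.

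For part~(c), membership in \PSPACE is essentially automatic: the full game state is the avatar's vertex plus the open/closed bit of every door, a polynomial-size description, with deterministic transitions once the avatar's next move is fixed. The winability question is then reachability in a succinctly-specified state graph, which lies in \NPSPACE and hence in \PSPACE by Savitch's theorem; combined with Metatheorem~\ref{m2}.c this yields \PSPACE-completeness. I expect the most delicate point to be part~(b): one must be careful that the pressure-plate semantics are really ``assign to the target state'' rather than ``toggle'', since a genuine toggle by a single plate would allow exponentially many distinct door configurations to be revisited and would break the short-certificate argument.
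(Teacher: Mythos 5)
Your proposal is correct and follows essentially the same route as the paper: the same canonical graph-based games, the same monotone fixed-point iteration for \P membership in part~(a), the same ``each plate has a non-trivial effect at most once, so an injective sequence of plate activations is a certificate'' argument for part~(b), and the same linear-size state plus Savitch's theorem for part~(c). Your closing caveat about ``assign'' versus ``toggle'' semantics is consistent with the paper's definition of pressure plates, which indeed only open or close a fixed door.
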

\begin{proof}
We consider games played on graphs whose vertices may contain pressure plates, and whose edges may contain doors. Then, $\mathcal G_1$ belongs to \P because the set of vertices accessible to the avatar can only increase whenever a pressure plate is activated. As new pressure plates become accessible, the avatar immediately activates them, opening new doors, until either the exit becomes accessible, or no new pressure plates are discovered.

$\mathcal G_2$ is in \NP because no pressure plate can undo the effects of another pressure plate. Therefore, we may pretend that a pressure plate disappears as soon as it is activated. It follows that a certificate for this game is simply an injective sequence of pressure plates.

Finally, to see that $\mathcal G_3$ is in \PSPACE $=$ \NPSPACE (cf.~Savitch's theorem~\cite{papadimitriou}), it is sufficient to observe that the a level's \emph{state} can be stored in linear space, allocating one bit for each door and storing the position of the avatar in the graph. Then, a certificate is just a walk in the graph.
\end{proof}

Metatheorem~\ref{m2}.c has a wide range of straightforward applications: most first-person shooters (with the notable exception of Wolfenstein~3D), adventure games, and dungeon crawls are all \PSPACE-hard. This includes RPGs such as Dungeon Master, The Eye of the Beholder, and Lands of Lore, which natively implement doors operated by pressure plates. Similar mechanisms can be implemented also in the first-person shooter Doom and its sequels, via walkover lines and sector tags. In simple terms, whenever the player-controlled avatar crosses a certain line on the ground, a ``block'' somewhere in the level is moved to a predefined location, thus simulating the opening or closure of a door. All the point-and-click adventure games based on LucasArts' SCUMM engine, such as Maniac Mansion and The Secret of Monkey Island, as well as most Sierra's adventure games, easily fall in this category, too.

\subsection{Doors and buttons}

A \emph{button} is similar to a pressure plate, except that the player may choose whether to push it or not, whenever his avatar encounters one.

Games with buttons are in general not harder than games with pressure plates, because a pressure plate can trivially simulate a button, as Figure~\ref{f6a} shows. However, since the converse statement is not as clear, we will allow a single button to act on several doors, in contrast with pressure plates. A button acting on $k$ doors simultaneously is called a \emph{$k$-button}.

\begin{figure}[htbp]
\centering
\subfloat[Simulating a button for $\pm x$ with a pressure plate.]{
	\label{f6a}
	\includegraphics[scale=2.5]{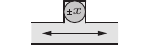}}\qquad\qquad\qquad
\subfloat[Single-use path implemented with doors and 2-buttons.]{
	\label{f6b}
	\includegraphics[scale=2.5]{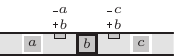}}
\caption{Button-related constructions. Small rectangles represent buttons, with hovering letters indicating their functions.}
\label{f6}
\end{figure}

We obtain an analogous of Metatheorem~\ref{m2} for buttons.

\begin{metatheorem}\label{m3}
If a game features doors and $k$-buttons, and the avatar has to reach an exit location in order to win, then:
\begin{enumerate}
\item[\textbf{\emph{(a)}}] If $k\geqslant 1$ and crossovers are allowed, then the game is \P-hard.
\item[\textbf{\emph{(b)}}] If $k\geqslant 2$, then the game is \NP-hard.
\item[\textbf{\emph{(c)}}] If $k\geqslant 3$, then the game is \PSPACE-hard.
\end{enumerate}
\end{metatheorem}
\begin{proof}
For part~(a), we mirror the proof of Metatheorem~\ref{m2}.a, by using $1$-buttons as opposed to pressure plates. Indeed, pressing a button to open a door is never a ``wrong'' move, if the goal is to reach the exit location.

For part~(b), we implement single-use paths as in Figure~\ref{f6b}: in order to open door $b$, one of the two buttons has to be pressed, thus permanently closing door $a$ or door $c$. Then we proceed as in the proof of Metatheorem~\ref{m2}.b.

Finally, for part~(c), we use the gadget in Figure~\ref{f7a} to simulate a generic pressure plate for $\pm x$: the only way to traverse the gadget from left to right (the other direction is symmetric) is to press the buttons as indicated in Figures~\ref{f7b},~\ref{f7c}, and~\ref{f7d}, incidentally activating also door $x$. Moreover, observe that, no matter how the six buttons are operated, there is no way to ``break'' the gadget by leaving its four doors in an open/closed state that is not the original one. Now, by simulating general pressure plates, we can apply the \PSPACE-hardness framework used for Metatheorem~\ref{m2}.c, concluding the proof. (Observe that our gadget can be further simplified: by inspecting the proof of Metatheorem~\ref{m2}.c, it is apparent that each pressure plate is traversed by the avatar in only in one direction. Hence, only three buttons are sufficient in our gadget, e.g., those that are used to traverse it from left to right.)
\end{proof}

\begin{figure}[htbp]
\centering
\subfloat[Starting configuration.]{
	\label{f7a}
	\includegraphics[scale=2.5]{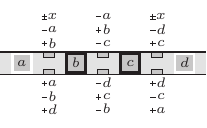}}\qquad\qquad
\subfloat[Approaching from the left, the circled button must be pressed to open door $b$. This also operates door $x$, as required.]{
	\label{f7b}
	\includegraphics[scale=2.5]{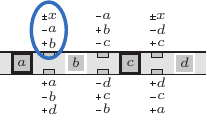}}\\
\subfloat[A second button must be pressed to open door $c$ and proceed.]{
	\label{f7c}
	\includegraphics[scale=2.5]{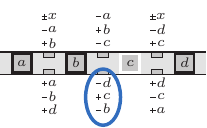}}\qquad\qquad
\subfloat[A third button must be pressed in order to open door $d$ and exit the gadget to the right. The gadget's final configuration is the same as the initial one.]{
	\label{f7d}
	\includegraphics[scale=2.5]{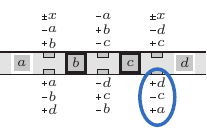}}
\caption{Simulating a pressure plate for $\pm x$ with 3-buttons.} \label{f7}
\end{figure}

We can show that Metatheorem~\ref{m3}.a and Metatheorem~\ref{m3}.c are tight.

\begin{corollary}
There exist games $\mathcal G_1$ and $\mathcal G_2$, featuring doors, in which the avatar has to reach an exit location, such that:
\begin{enumerate}
\item[\emph{(a)}] $\mathcal G_1$ features $1$-buttons and is \P-complete.
\item[\emph{(b)}] $\mathcal G_2$ features $k$-buttons, with $k\geqslant 3$, and is \PSPACE-complete.
\end{enumerate}
\end{corollary}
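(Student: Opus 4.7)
The plan is to mirror Corollary~\ref{cor2}, exhibiting two graph-based games whose hardness is immediate from Metatheorem~\ref{m3} and matching it with direct simulation arguments for membership. I would take $\mathcal G_1$ to be the game played on an undirected graph in which edges may carry doors and vertices may carry $1$-buttons (each opening or closing its associated door), and $\mathcal G_2$ to be the analogous game in which vertices may carry $k$-buttons with $k\geqslant 3$. Since an abstract graph trivially accommodates crossovers, Metatheorem~\ref{m3}.a gives \P-hardness of $\mathcal G_1$, and Metatheorem~\ref{m3}.c gives \PSPACE-hardness of $\mathcal G_2$.

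For part~(a), I would show $\mathcal G_1\in\P$ via a monotone fixpoint computation. The central observation is that closing a door can never help the avatar: since a door is traversable only when open, shrinking the set of open doors can only shrink the set of reachable vertices. Hence an optimal player may safely ignore every button that closes a door, and may press every accessible opening button as soon as it becomes reachable. The algorithm initializes $R=\{s\}$, iteratively expands $R$ along currently open doors, presses all opening buttons located in $R$ to enlarge the set of open doors, and repeats until $R$ stabilizes; the level is solvable if and only if the exit lies in the final $R$. Both $R$ and the set of open doors are monotone non-decreasing across iterations, so the loop terminates after polynomially many rounds.

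For part~(b), I would reuse the \PSPACE containment argument from the proof of Corollary~\ref{cor2}.c: the \emph{state} of an instance of $\mathcal G_2$ consists of the avatar's current vertex together with one bit per door recording its open/closed status, which fits in linear space. Nondeterministically guessing a sequence of moves and button presses that reaches the exit places $\mathcal G_2$ in \NPSPACE, and \NPSPACE $=$ \PSPACE by Savitch's theorem~\cite{papadimitriou}. Combined with the hardness from Metatheorem~\ref{m3}.c, this yields \PSPACE-completeness.

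The main obstacle I anticipate is making the ``closing buttons are useless'' claim in part~(a) precise: because buttons are voluntary (unlike pressure plates), one might worry that some artful close-then-reopen sequence could help bypass an obstacle. The argument rests on the fact that a $1$-button controls only a single door, so the effect of any press on the set of traversable edges is purely local and monotone with respect to reachability; formally, given any winning play, deleting all the closing-button presses yields another winning play, since at every moment the set of open doors only grows. This monotonicity is exactly what breaks when $k\geqslant 2$, which is consistent with Metatheorem~\ref{m3}.b jumping up to \NP-hardness.
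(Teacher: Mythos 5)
Your proposal is correct and follows essentially the same route as the paper: graph-based games, a monotone ``press every opening button, ignore every closing button'' reachability computation for \P membership in part~(a), and linear-space state plus Savitch's theorem for \PSPACE membership in part~(b). Your explicit justification that deleting closing-button presses from a winning play preserves winning is a welcome elaboration of a point the paper leaves implicit.
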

\begin{proof}
Once again, we consider games played on a graph, in which some vertices contain buttons, and some edges implement doors. To prove that $\mathcal G_1$ belongs to \P, observe that each 1-button either opens or closes a door. 1-buttons opening doors can be presses as soon as they become accessible, and 1-buttons closing doors may be ignored. Therefore the game is equivalent to that of Corollary~\ref{cor2}.a.

Similarly to the game of Corollary~\ref{cor2}.c, $\mathcal G_2$ belongs to \PSPACE $=$ \NPSPACE because the level's state can be stored in linear space, and a certificate is the sequence of the avatar's positions and the $k$-switches pressed.
\end{proof}

It remains an open problem to establish if also Metatheorem~\ref{m3}.b is tight.

\begin{conjecture}
There exists an \NP-complete game featuring doors and 2-buttons in which the avatar has to reach an exit location.
\end{conjecture}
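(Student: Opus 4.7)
The plan is to exhibit a specific graph-based game $\mathcal{G}$ that meets the hypotheses and prove it is \NP-complete. We take $\mathcal{G}$ to be played on a finite graph whose edges may carry doors (each marked as initially open or closed) and whose vertices may carry 2-buttons (each tagged with a pair of effects of the form ``open door $x$'' or ``close door $y$''). We add one stipulation: every 2-button is \emph{one-shot}, meaning that it vanishes from the board after being pressed. The avatar starts at a designated vertex and must reach a designated exit vertex.

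\NP-hardness will follow directly from Metatheorem~\ref{m3}.b once we verify that its proof is compatible with the one-shot restriction. The single-use-path gadget of Figure~\ref{f6b} is naturally one-shot by design: after either of its two 2-buttons has been pressed, pressing the same button a second time would only attempt to re-close an already-closed door. The remaining ingredient, location traversal as required by Metatheorem~\ref{m1}, can be realized by blocking the exit with a long series of initially closed doors and placing in each mandatory location a 2-button whose two effects both \emph{open} two of those blocking doors; such buttons are monotone and a second press is pointless, so the full reduction embeds into $\mathcal{G}$ without modification.

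For \NP membership, a certificate will simply be the ordered sequence of button presses performed during the play. Since each of the $b$ buttons is one-shot, this sequence has length at most $b$ and is therefore polynomial in the instance size. The verifier processes it in order, maintaining the current door configuration and the avatar's vertex: at each step it runs a breadth-first search to confirm that the next button in the list is reachable through currently-open doors, then updates the configuration according to the button's two effects; after the final press (or immediately, if the list is empty), one last BFS confirms that the exit is reachable. All operations run in polynomial time.

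The main obstacle is choosing a restriction that is gentle enough to preserve the \NP-hardness reduction of Metatheorem~\ref{m3}.b but strong enough to keep the game from climbing up to \PSPACE-hardness. Without the one-shot restriction, two 2-buttons with opposite effects on a shared door can be used to oscillate its state, which is precisely the phenomenon that Metatheorem~\ref{m3}.c exploits via the gadget of Figure~\ref{f7} to obtain \PSPACE-hardness already at $k=3$. The one-shot stipulation shuts down this ping-pong behaviour cleanly, while being mild enough that the \NP-hardness construction for 2-buttons still goes through essentially verbatim.
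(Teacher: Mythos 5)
The statement you are addressing is stated in the paper as a \emph{conjecture}, and the paper offers no proof: the remark immediately following it explains exactly why the obvious argument fails, namely that ``it is easy to construct levels in which some 2-buttons have to be pressed more than once in order to reach the exit, therefore it is not trivially true that the sequence of buttons pressed is a polynomial certificate.'' Your proposal does not overcome this obstacle; it removes it by fiat. By declaring every 2-button \emph{one-shot}, you are no longer playing the game the conjecture is about: in the paper's model a button is a persistent element that the player ``may choose whether to push\dots whenever his avatar encounters one,'' and the entire open question is whether \emph{that} game admits a polynomial certificate despite the possibility of exponentially long play sequences that re-press buttons. A one-shot 2-button is a strictly weaker mechanic (closer in spirit to the single-use constructions of Metatheorem~\ref{m1}), and proving that the weakened game is in \NP is indeed trivial, as your certificate argument shows --- but that is precisely why the weakened game does not witness the conjecture in any interesting sense. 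If a solver were free to restrict the mechanics until membership becomes easy, the conjecture would not have been left open.

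To your credit, the \NP-hardness half of your argument is sound: in the reduction behind Metatheorem~\ref{m3}.b every button of the single-use-path gadget of Figure~\ref{f6b} and every exit-blocking button is pressed at most once along any useful play, so the construction survives the one-shot restriction verbatim, and your location-traversal buttons (both effects being door openings) are monotone. The genuine gap is entirely on the membership side for the unrestricted game: you would need to show either that some polynomial bound can be placed on the number of \emph{useful} presses of each 2-button (perhaps by a normal-form argument on plays, as the paper does for tokens and keys in its corollaries), or to design a different \NP-complete game in which buttons retain their standard re-pressable semantics. As it stands, your proposal proves a different, essentially trivial statement and leaves the conjecture exactly where the paper leaves it.
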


It is easy to construct levels in which some 2-buttons have to be pressed more than once in order to reach the exit, therefore it is not trivially true that the sequence of buttons pressed is a polynomial certificate.

\section{Applications and further results}\label{s3}

In this section we apply the previous metatheorems to some well-known games. In some cases we will merely have to ``simulate'' all the required elements and mechanics with appropriate gadgets, and the reductions will immediately follow from metatheorems' statements. In other cases, the desired effects will be obtained by taking into account specific features of a metatheorem's proof, and noticing that the patterns we wish to construct have particular properties that, for the sake of brevity, are not mentioned in the actual statement of the metatheorem (cf.~Section~\ref{s2}'s introduction).

Most of the results we prove are new. For the results that were already known, we either provide simplified reductions, or reductions that use a different set of game elements. The following table relates all the games that we consider in this section with their complexities, indicating whether or not each result is new.

\begin{center}
\begin{tabular}{c|c|c}
\textbf{Game} & \textbf{Complexity} & \textbf{New}\\
\hline
Boulder Dash & \NP-hard & \\
Deflektor & \L & \checkmark \\
Lemmings & \NP-hard & \\
Lode Runner & \NP-hard & \checkmark \\
Mindbender & \NL-hard & \checkmark \\
Pac-Man & \NP-hard & \checkmark \\
Pipe Mania & \NP-complete & \checkmark \\
Prince of Persia & \PSPACE-complete & \\
Puzzle Bobble 3 & \NP-complete & \checkmark \\
Skweek & \NP-hard & \checkmark \\
Starcraft & \NP-hard & \checkmark \\
Tron & \NP-hard & \checkmark \\
\end{tabular}
\end{center}

Some of our reductions produce quite contrived levels or configurations, which are very unlikely to occur in the real games: for instance, a draw in Starcraft is somewhat rare, and a Tron configuration such as the one presented is definitely unnatural. For these games, designing reductions that preserve most of the relevant aspects of the gameplay remains a challenging open problem.

\subsection{Boulder Dash (First Star Software, 1984) is \NP-hard}

\paragraph{Game description.}
The game is similar to Sokoban, but with added gravity. The player-controlled avatar may push (but not pull) single boulders horizontally, excavate some special tiles, and must collect diamonds while avoiding monsters. When a certain amount of diamonds has been collected, an exit door appears, and the avatar has to reach it to beat the level. Gravity affects boulders and diamonds, but not the avatar or the monsters.

\paragraph{Complexity.}
A proof that ``pushing blocks in gravity'' is \NP-hard has been given by Friedman in 2002~\cite{gravity}, based on a rather involved reduction scheme and several gadgets that may be adapted to work with the slightly different ``physics'' of Boulder Dash (namely, if a boulder is on the ``edge'' of a pit, it falls down even if it is not pushed).

We give a much simpler proof that relies on Metatheorem~\ref{m1}. Location traversal is trivially implemented, due to the presence of diamonds that have to be collected: we place one diamond in each relevant location, forcing the player to visit them all. A single-use path gadget is illustrated in Figure~\ref{fbd}: when traversing the gadget in either direction for the first time, three boulders are pushed in the pits. On the second traversal attempt, the fourth boulder blocks the path.

Notably, we do not use diggable tiles or monsters in our reduction, although we do require diamonds.

\begin{figure}[htbp]
\centering
\subfloat[Initial configuration.]{
	\label{fbd1}
	\includegraphics[scale=0.75]{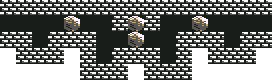}}\\
\subfloat[Configuration after a traversal from left to right.]{
	\label{fbd2}
	\includegraphics[scale=0.75]{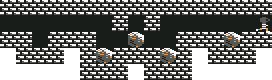}}\\
\subfloat[On the second traversal attempt, a boulder blocks the way.]{
	\label{fbd3}
	\includegraphics[scale=0.75]{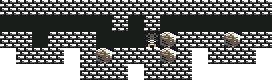}}
\caption{Single-use path for Boulder Dash.}
\label{fbd}
\end{figure}

\subsection{Deflektor (Vortex Software, 1987) is in \L}

\paragraph{Game description.}
This is a psychedelic puzzle game in which a laser ray has to be reflected around in the level by rotating several mirrors. The laser ray is emanated by a laser beam and must be reflected to a predefined location, after hitting several items that must be collected, while avoiding static mines, and without reflecting the laser back to the source for too long (which overheats the beam). All the relevant game elements are static, or can rotate in place. There are 16 possible orientations for the laser ray, and making it reach some location basically boils down to finding the correct way to orient the player-controlled mirrors. Some tiles act as reflecting walls, some are opaque and absorb the ray, some special tiles act as teleporters, others as self-rotating mirrors, or self-rotating \emph{polarizators} that may be traversed by the ray only in one direction at a time. All polarizators have eight possible orientations, and rotate at the same speed (or they are static). There are also some prisms that randomly refract the ray, and some gremlins that attach to player-controlled mirrors and randomly reorient them.

\paragraph{Complexity.}
This is a remarkable example of an ``easy'' commercial game: Deflektor is solvable in logarithmic space, which is a quite uncommon feature for a puzzle game, and possibly contributed to its modest success.

The key observation is that the ray never needs to be reflected twice by the same mirror in order to reach some location, because it can be re-oriented to any direction already on its first reflection.

For our purposes, prisms count as a special type of player-orientable mirror, as they effectively refract the ray in any desired direction and at the right moment, after waiting a long-enough time. Similarly, self-rotating mirrors count as regular mirrors, as the player can indeed slow them down or accelerate them. On the other hand, gremlins may be disregarded in our analysis, as their presence is merely a nuisance and never really prevents a level from being solved.

Next we show how to reduce Deflektor to the \L problem \UCON~\cite{papadimitriou,ucon}.
Recall that there are eight possible combined orientations of the polarizators, as they are either static or keep rotating at the same speed. Each combined orientation yields a \emph{reachability graph} $G_i$ on the game elements, which tells if a ray can be redirected by an object onto another. A reachability graph may be computed in \L by shooting the 16 possible rays from each mirror (or prism) and extending them until each ray is absorbed or reaches a relevant game element, such as a mirror or a collectible item. This necessarily happens after a finite number of reflections, because the possible ray slopes are rational, and a ray that is never absorbed must have a periodic trajectory.

Let $G^\star$ be the disjoint union of all the $G_i$'s, in which the eight copies of the laser beam are connected to a common \emph{beam vertex}. Finding a path in $G^\star$ from the beam vertex to one of the eight copies of an object means that the laser ray can be redirected to that object after suitably rotating the mirrors, and after waiting for the polarizers to be properly oriented.

The final graph is obtained as the disjoint union of several copies of $G^\star$, one for each item to collect. Let us arbitrarily order the items, and let $G^\star_j$ be the copy of $G^\star$ associated to the $j$-th item (here, the exit location counts as the last item in the list). Then, the eight copies of the $j$-th item in $G^\star_j$ are linked to the beam vertex of $G^\star_{j+1}$. The beam vertex of $G^\star_1$ will be called the \emph{starting vertex}, and the eight copies of the last item in the last copy of $G^\star$ are connected to a common vertex, which will be the \emph{ending vertex}.

Thus, the final graph has a path from the starting vertex to the ending vertex if and only if the mirrors can be oriented in such a way that the first item in the list can be collected, then reoriented to collect the second item, etc., and finally the exit location can be reached. Because items can be collected in any order, this is equivalent to solving the level.

\subsection{Lemmings (DMA Design, 1991) is \NP-hard}

\paragraph{Game description.}
The player has to guide a tribe of lemming creatures to safety through a hazardous landscape, by assigning them specific skills that modify their behavior in different ways. There is a limit to the number of times each skill can be assigned to a lemming, and skills range from building a short stair, to excavating a tunnel, to climbing vertical walls, etc. If no skill is assigned to a lemming, it keeps walking forward, turning around at walls, and falling into pits. Hazards include deadly pools of water or lava, several kinds of traps, and long falls. To beat a level, at least a given percentage of lemmings has to reach one of several exit portals within a time limit. The complete set of rules, especially the ways lemmings behave in different landscapes, is quite complex, and has been described by the author in~\cite{lemmings2}.

\paragraph{Complexity.}
The \NP-hardness of Lemmings was already proved by Cormode in~\cite{lemmings} using only Digger skills. More recently, in~\cite{lemmings2}, the author showed that Lemmings is \PSPACE-complete, even if there is only one lemming in the level, and only Basher and Builder skills are available (the technique used is based on a variant of Metatheorem~\ref{m2}.c). Here we propose a simple alternative proof of \NP-hardness, which uses only Basher skills (that allow lemmings to dig horizontally) and relies on Metatheorem~\ref{m1c}.b. Our construction can be easily modified to work with Miner skills, too (used to dig diagonally).

\begin{figure}[htbp]
\centering
\subfloat[Initial configuration with a trapped lemming.]{
	\label{flm1}
	\includegraphics[scale=0.7]{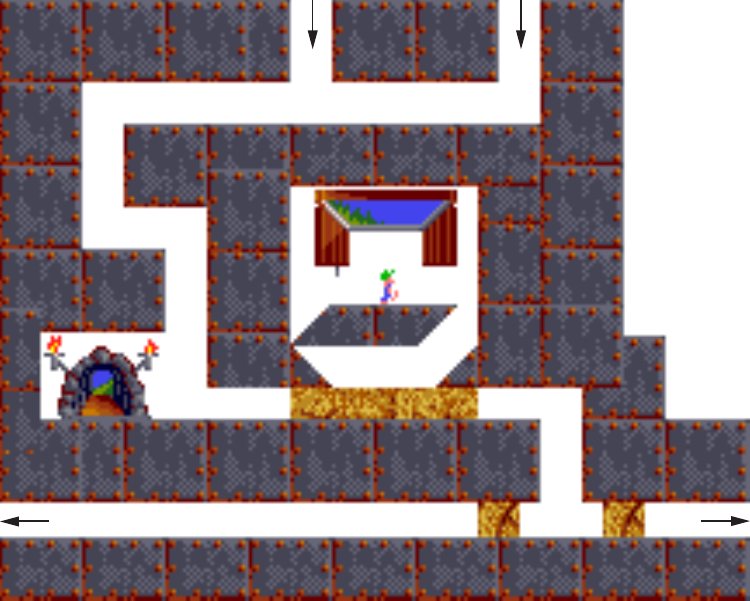}}\qquad
\subfloat[Setting the prisoner free with a Basher.]{
	\label{flm2}
	\includegraphics[scale=0.7]{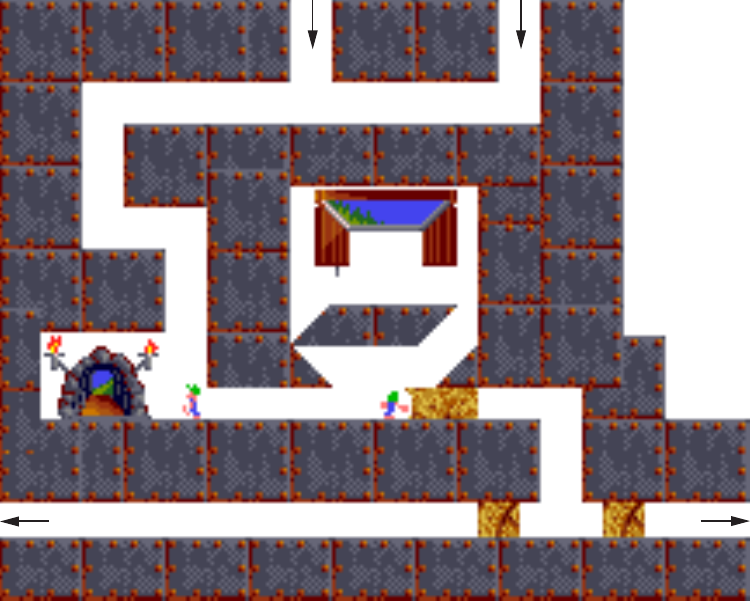}}
\caption{Enforcing location traversal in Lemmings.}
\label{flm}
\end{figure}

We model each location as in Figure~\ref{flm1}. As the game starts, exactly one lemming joins the level from the trapdoor, and is bound to stay in the enclosed area, walking back and forth. If either the indegree or the outdegree of the location is less than two, we suitably remove some of the passages marked by arrows. In the starting location, we fit a second trapdoor that releases another lemming in the upper corridor: this lemming is not a prisoner, and will be the avatar. In the final location, we replace the outgoing passages with an exit portal, which is intended for the avatar. We appropriately connect locations together as the arrows suggest, and we make sure that the right outgoing passage of the starting location leads to the exit location (rather than the left passage). It is quite simple to implement paths that go in any direction and can be effectively traversed by lemmings: simple ways to build them have been described in~\cite{lemmings} and~\cite{lemmings2}. All the lemmings in the level must reach an exit portal, and the initially available skills are $2n+2$ Bashers, where $n+1$ is the amount of locations in the level.

The tiles with the steel texture in Figure~\ref{flm} cannot be excavated, hence any lemming that is not the avatar is trapped inside a cage, waiting for the avatar to rescue it by Bashing the ground below, as Figure~\ref{flm2} illustrates. As a by-product of the lemmings' moving rules, when a prisoner hits the leftmost wall of its cage, it turns around and climbs on the upper platform. Then it falls down to the right, turns around at the wall, and walks on the lower ground from right to left. Hence, the prisoner is bound to come out of its cage facing left as soon as the avatar Bashes its ground, and it will inevitably reach the nearby exit portal. This implies that the avatar must visit every location in the level (location traversal) and that at least $n+1$ Basher skills have to be used to rescue all the prisoners. At any time, the number of available Basher skills minus the number of trapped lemmings will be understood as the number of keys carried by the avatar. Note that a small amount of ground (i.e., a door) must be Bashed in order to exit a location from an unvisited outgoing edge, and the initial amount of keys is $n+1$. This agrees with the key-door paradigm and the requirements of Metatheorem~\ref{m1c}.b, except for the presence of a door in the path between $v$ and $u$, which is matched by the extra available key (this must be the last door to be opened anyway, hence the discrepancy can be safely ignored).

The passages connecting locations are indeed one-way paths by construction, since lemmings cannot change direction unless they encounter a wall. Observe that the avatar may choose which outgoing path to take on its first traversal of a location. However, after the right path has been taken, there is no way to take the left path on the second traversal (but not vice versa). Fortunately, by inspecting the proof of Metatheorem~\ref{m1c}.b, this does not appear to be an issue, because it is not restrictive to assume that each location will be visited only once, except for the starting location, in which both outgoing paths must be taken. But we made sure that the right path leads to the exit location, and therefore it must be the last path to be traversed, which is indeed feasible.

\subsection{Lode Runner (Br\o derbund, 1983) is \NP-hard}

\paragraph{Game description.}
The player-controlled avatar must collect gold pieces while avoiding monsters, and is able to dig holes into certain floor tiles (those that look like bricks), which regenerate after a few seconds. Both the avatar and the monsters may fall into such holes, and the avatar cannot jump. The avatar is killed when it is caught by a monster, but it can safely stand in the tile directly above a monster's head. Monsters behave deterministically, according to the player's moves, although their behavior is often quite counterintuitive, as they do not always take the shortest path toward the player's avatar. When every gold piece has been collected, a ladder appears, leading to the exit.

\paragraph{Complexity.}
We apply Metatheorem~\ref{m1}: location traversal is implied by the collecting items feature, and a single-use path is illustrated in Figure~\ref{flr}. On the first traversal, the avatar can safely land on top of the monster and dig a hole to the left. The AI will make the monster fall in the hole, so the avatar may follow it, land on its top again, and proceed through a ladder, while the brick tile regenerates and the monster remains trapped in the hole below. The avatar cannot attempt to traverse the gadget a second time without getting stuck in the hole where the monster previously was (recall that the avatar can neither jump, nor dig holes horizontally).

\begin{figure}[htbp]
\centering
\subfloat[Initial configuration.]{
	\label{flr1}
	\includegraphics[scale=0.65]{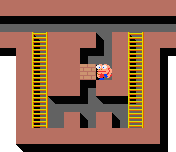}}\qquad\qquad
\subfloat[Safely landing on top of the monster.]{
	\label{flr2}
	\includegraphics[scale=0.65]{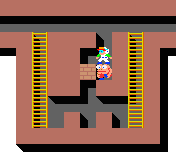}}\\
\subfloat[Digging the brick tile to set the monster free.]{
	\label{flr3}
	\includegraphics[scale=0.65]{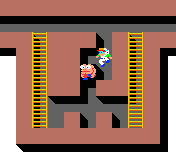}}\qquad\qquad
\subfloat[Following the monster down the hole and exiting through a ladder.]{
	\label{flr4}
	\includegraphics[scale=0.65]{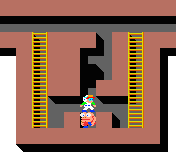}}\\
\subfloat[The brick tile regenerates, and the monster remains trapped below.]{
	\label{flr5}
	\includegraphics[scale=0.65]{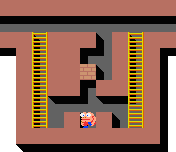}}\qquad\qquad
\subfloat[On the second traversal attempt, the avatar gets stuck.]{
	\label{flr6}
	\includegraphics[scale=0.65]{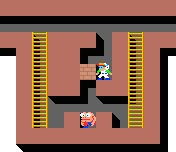}}
\caption{Single-use path for Lode Runner.}
\label{flr}
\end{figure}

\subsection{Mindbender (Magic Bytes, 1989) is \NL-hard}

\paragraph{Game description.}
This is the fantasy-themed sequel of Deflektor (see above), with a wizard shooting a ray of light in some direction, static gnomes holding orientable mirrors, kettles that must be collected by hitting them with the ray of light, and several new game elements. These include collectible keys that open locks, wandering monsters that eat kettles, movable blocks, etc.

All the elements of Deflektor are recreated in Mindbender, with substantially identical mechanics, with one crucial exception: polarizators in Mindbender are manually orientable by the player, whereas in Deflektor they rotate on their own.

\paragraph{Complexity.}
The full game is easily \NP-hard and arguably \PSPACE-complete, but the interesting fact is that even the subgame that is supposed to be ``isomorphic'' to Deflektor is in fact \NL-complete, thus harder than Deflektor.

\begin{figure}[htbp]
\centering
\subfloat[Initial configuration.]{
	\label{fmb1}
	\includegraphics[scale=0.85]{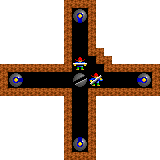}}\qquad\qquad
\subfloat[Redirecting a ray from left to top.]{
	\label{fmb2}
	\includegraphics[scale=0.85]{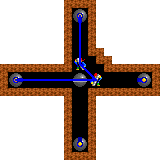}}
\caption{Implementing a directed graph in Mindbender.}
\label{fmb}
\end{figure}

We give a straightforward reduction from the \NL-complete problem \DCON~\cite{papadimitriou}: first of all, we may assume that each vertex of the given graph has indegree and outdegree at most two. Then, each such vertex is modeled with the gadget in Figure~\ref{fmb}. The left and bottom teleporters correspond to incoming edges, while the upper and right teleporters correspond to outgoing edges. Teleporters in different gadgets are connected together according to the topology of the given graph (which may be non-planar).

The central object is a polarizator, which can be oriented by the player and lets light rays pass in one direction only. The two gnomes can reflect the light either upward or rightward, as Figure~\ref{fmb2} exemplifies. No light ray can be redirected from the left teleporter to the bottom one, or vice versa, due to the polarizator.

The starting vertex contains the wizard instead of the left (or bottom) teleporter, and the ending vertex contains the exit door instead of the top (or right) teleporter. The level is solvable if and only if a path exists from the starting vertex to the ending vertex. Remarkably, no kettle has been used in the reduction.

\subsection{Pac-Man (Namco, 1980) is \NP-hard}

\paragraph{Game description.}
The player controls a yellow ball whose task is to collect all the \emph{pills} in a maze, while avoiding \emph{ghosts}. Collecting some special \emph{power pills} makes the avatar invulnerable for a short time, giving it the ability to temporarily disable ghosts upon contact. Despite this seeming simplicity, the full set of rules is quite complicated~\cite{pacman}, although just a few are relevant to our purposes.

Ghosts come in four different colors, and a ghost's color determines its behavior. However, all ghosts alternate between Chase mode and Scatter mode. In Chase mode, they follow the avatar with different heuristics, and in Scatter mode they head toward a preset location. There is also a Frightened mode, which is entered when the avatar collects a power pill, and makes all ghosts move randomly. After a few seconds, the effects of the power pill expire and all ghosts are back to Chase and Scatter modes. If a ghost in Frightened mode is touched by the avatar, it goes back to its starting location, a \emph{ghost house}, and comes out again shortly.

As a general rule, every time there is a mode switch, all ghosts immediately reverse their direction. Other than that, ghosts may never reverse direction, not even upon reaching a maze intersection, and not even when in Frightened mode. (This is also a practical way for the player to tell when a mode switch occurs.)

Depending on the game level, all timings and speeds are subject to variations: ghosts may be faster or slower in different modes, and the durations of the three modes may vary. Usually, during Frightened mode, the avatar speeds up and the ghosts slow down.

\paragraph{Complexity.}
The decision problem is whether a level can be completed without losing lives. We assume full configurability of the amount of ghosts and ghost houses, speeds, and the durations of Chase, Scatter, and Frightened modes. We do not alter the basic game mechanics or the AI, though.

We prove \NP-hardness by applying Metatheorem~\ref{m1b}.a. A location with an adjacent toll road is sketched in Figure~\ref{fpa}. Power pills are used to model tokens, so the starting location contains two power pills, and the final location contains none. Hence, to properly enforce location traversal, we further place a normal pill in the final location.

\begin{figure}[htbp]
\centering
\subfloat[Initial configuration, approached from the top.]{
	\label{fpa1}
	\includegraphics[scale=0.65]{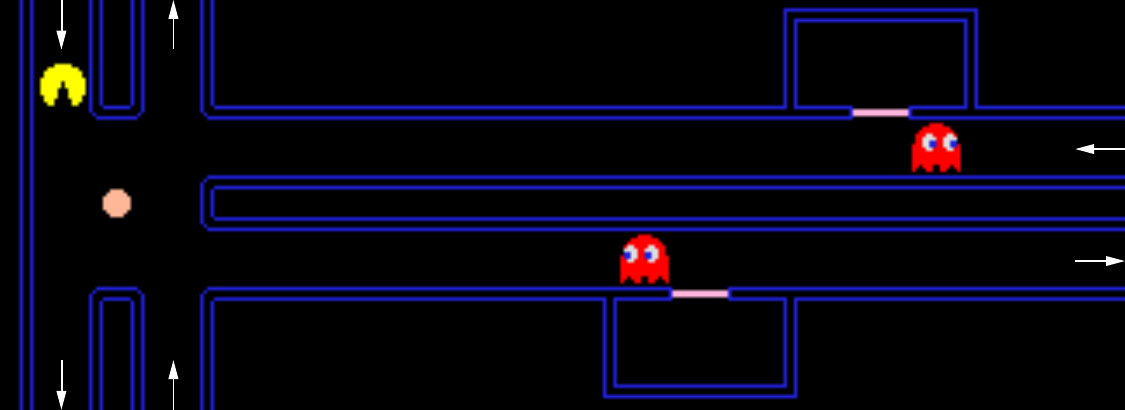}}\\
\subfloat[Collecting the power pill to traverse the corridor.]{
	\label{fpa2}
	\includegraphics[scale=0.65]{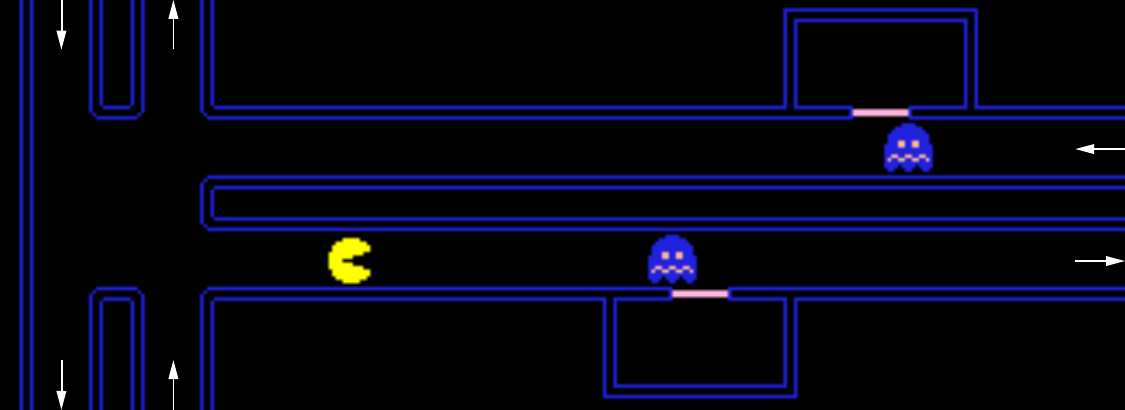}}\\
\subfloat[Exiting to the right while the ghosts recover.]{
	\label{fpa3}
	\includegraphics[scale=0.65]{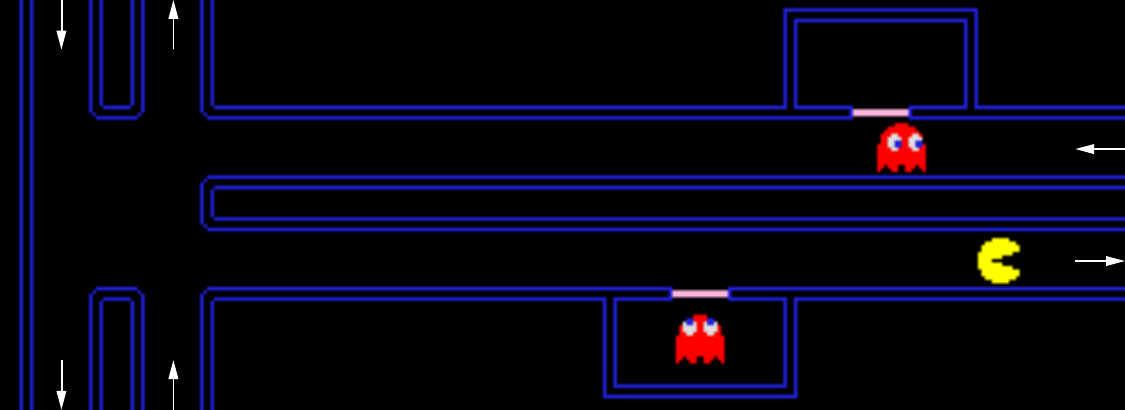}}
\caption{Sketch of a location and toll road for Pac-Man.}
\label{fpa}
\end{figure}

Each toll road is implemented as a pair of parallel maze corridors, each of which contains a ghost house somewhere, spawning one red ghost. The two corridors are intended to be traversed in opposite directions by the avatar (i.e., they are one-way paths).

Chase and Scatter modes have the same duration, and all ghosts have the same speed in both modes. Let $p$ be the number of tiles each ghost covers between two mode switches. Frightened mode lasts longer, but ghosts slow down, covering exactly $f$ tiles while in that mode. We make sure that each ghost house is found exactly $d=p+(n+1)f$ tiles away from its corridor's entrance, and that each corridor is more than $2d$ tiles long.

As the game starts, the ghosts spawn in front of their respective ghost houses, and start in Chase mode, following the corridor in some direction. Whenever a mode switch occurs, all ghosts reverse direction, and they cannot change it again until the next mode switch, because they never reach a maze intersection. As a result, each ghost ``patrols'' a portion of length $p$ of its own corridor. By construction, since Frightened mode can be entered at most $n+1$ times, no ghost may ever leave its corridor.

Upon collecting a power pill, the avatar's speed increases in such a way that it can cover $2d$ tiles (or slightly more) into any adjacent corridor. By doing so, the avatar consumes a token and, if the corridor is traversed in the proper direction, the ghost is necessarily encountered and sent back to the ghost house. By the time the avatar has reached the end of the corridor, the power pill's effects expire and the ghost comes out of the ghost house, making the toll road functional again.

\subsection{Pipe Mania (The Assembly Line, 1989) is \NP-complete}

\paragraph{Game description.}
Not to be confused with KPlumber, with a similar theme but much different mechanics~\cite{survey1}, in this puzzle game a long-enough pipe has to be constructed out of several pieces, randomly presented in a queue, starting from a given \emph{source location}. After a timer expires, a stream of water starts flowing from the source into the pipes, and the game is won if and only if the stream traverses a given number of tiles before spilling out.

Since the player can keep constructing pipes on the same tile, ``overwriting'' the previous pieces until he gets the piece that he wants, he may indeed shape the pipe as he pleases, if the initial timer lasts long enough. Some obstacles are also present in each level, such as fire hydrants, on which pipes cannot be built.

\paragraph{Complexity.}
Membership in \NP is obvious. For \NP-hardness, we apply Metatheorem~\ref{m1}. We use obstacles to model the boundaries of locations and paths, as Figures~\ref{fpm1} and~\ref{fpm3} illustrate. The resulting paths are necessarily single-use, as only one pipe can fit in them.

\begin{figure}[htbp]
\centering
\subfloat[Sketch of a location.]{
	\label{fpm1}
	\includegraphics[scale=0.75]{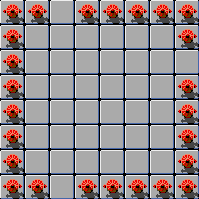}}\qquad\quad
\subfloat[Building a pipe that uses most tiles twice.]{
	\label{fpm2}
	\includegraphics[scale=0.75]{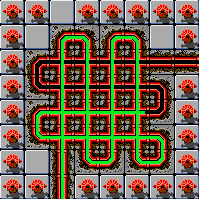}}
\caption{Enforcing location traversal in Pipe Mania.}
\label{fpma}
\end{figure}

We still need to establish location traversal. Suppose we implemented our planar graph with orthogonal lines as edges and squares as vertices. Let $\ell$ be the total length of the paths plus the area of the starting vertex, and $a$ be the side length of a generic vertex. If the number of vertices is $n+1$, the number of paths is $\Theta(n)$, because the graph is 3-regular (refer to the proof of Metatheorem~\ref{m1}).

\begin{figure}[htbp]
\centering
\subfloat[Initial configuration.]{
	\label{fpm3}
	\includegraphics[scale=0.75]{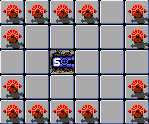}}\qquad\qquad\qquad
\subfloat[Traversal example.]{
	\label{fpm4}
	\includegraphics[scale=0.75]{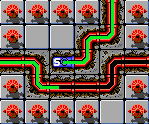}}
\caption{Starting location in Pipe Mania.}
\label{fpmb}
\end{figure}

Imagine scaling our construction by an integer factor $k$, in such a way that all paths preserve their unit width, but just increase their length. All the vertices are also scaled in size, except the starting vertex, which remains constant. Then, the total length of all paths plus the starting vertex area becomes $l'= kl-|\Theta(n)|$, and the area of a non-starting vertex becomes $A = k^2a^2 - |\Theta(ka)|$. When the pipe reaches a generic vertex, it can cover most of its tiles twice before taking the path toward another vertex (cross-shaped pieces must be used, see Figure~\ref{fpm2}). The length of such a pipe is at least $A'=2A-|\Theta(ka)|=2k^2a^2 - |\Theta(ka)|$. Let us set $k$ to a suitable $\Theta(n)$, so that $l'$ becomes negligible compared to $A'$. Now, it is sufficient to set the required length of the pipe to $nA'$ to ensure that all the vertices will be covered by it.

Recall that the starting vertex must be traversed twice by the pipe: Figure~\ref{fpm4} shows an example of how this can be done.

\subsection{Prince of Persia (Br\o derbund, 1989) is \PSPACE-complete}

\paragraph{Game description.}
The player has to guide an avatar through several dungeon levels, opening gates, fighting guards, and avoiding traps. Most gates are operated by pressure plates. The avatar can walk, run, jump, climb, duck, fight, etc.

\paragraph{Complexity.}
The game's \PSPACE-hardness was first proved in~\cite{platform}, but the rather involved construction may be replaced by a somewhat simpler one based on Metatheorem~\ref{m2}.c, which in addition does not rely on gravity, long falls, or on doors that can be opened by more than one pressure plate. To prevent the avatar from avoiding a pressure plate by jumping past it, we simply put it on an elevated tile, which has to be climbed in order to be traversed, as Figure~\ref{fpp} shows. We can even do without vertical walls (as in~\cite{platform}), because they can be substituted with unopenable gates.

\begin{figure}[htbp]
\centering
\includegraphics[scale=1.5]{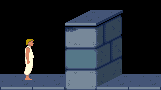}
\caption{Unavoidable pressure plate in Prince of Persia.}
\label{fpp}
\end{figure}

Membership in \PSPACE $=$ \NPSPACE (cf.~Savitch's theorem~\cite{papadimitriou}) is quite obvious, as the whole level's configuration can be stored in linear space, and enemy guards have a very simple pseudo-random fighting pattern.

\subsection{Puzzle Bobble 3 (Taito, 1996) is \NP-complete}

\paragraph{Game description.}
In this Tetris-like puzzle game, levels are made of several colored bubbles, stacked in a hexagonal distribution. The player controls a cannon at the bottom of the screen, which can shoot new bubbles of random colors in any direction. Bubbles attach to each other and, whenever at least three monochromatic bubbles form a connected set as a result of a shot, they pop. Monochromatic triplets may indeed be present in the initial level configuration, and they pop only when hit by a new bubble of the same color.

Some \emph{anchors} hold the whole stack together and, as soon as a bubble is not in the same ``connected component'' with an anchor, it falls down and is eliminated.

Apart from colored bubbles, there are \emph{stone blocks} that cannot be popped (but may fall if not held up by an anchor), and \emph{rainbow bubbles} that turn the same color of any bubble that pops next to them, and can later be popped like normal bubbles. Notably, if a set of at least three adjacent monochromatic bubbles is formed as a result of some rainbow bubbles turning that color, they immediately pop. This may even induce a ``chain reaction'' of exploding rainbow bubbles, during which the player is not allowed to shoot a new bubble, and must wait for the explosions to finish.

The goal is to clear all anchors, and an anchor is cleared if and only if no bubble is attached to it.

\paragraph{Complexity.}
We prove \NP-hardness by a reduction from \PTSAT. Several variable gadgets (Figure~\ref{fpba}) are stacked on top of each other, slightly staggered, on the far left of the construction. The clause gadgets (Figure~\ref{fpbb}) are on the right, far above the variable gadgets. To separate \emph{variable layers} from each other and from the clause gadgets, we put long \emph{shields} of stone blocks, extending from each variable gadget to the far right of the construction. The last shield (i.e., the one in the top layer) also extends all around the whole construction, on the right, top and left sides, preventing bubbles shot by the player from bouncing on the sides of the screen. Variables and clauses are connected via carefully shaped \emph{fuses} made of rainbow bubbles, forking and bending as in Figure~\ref{fpb1}.

\begin{figure}[htbp]
\centering
\subfloat[Initial configuration.]{
	\label{fpb1}
	\includegraphics[scale=0.65]{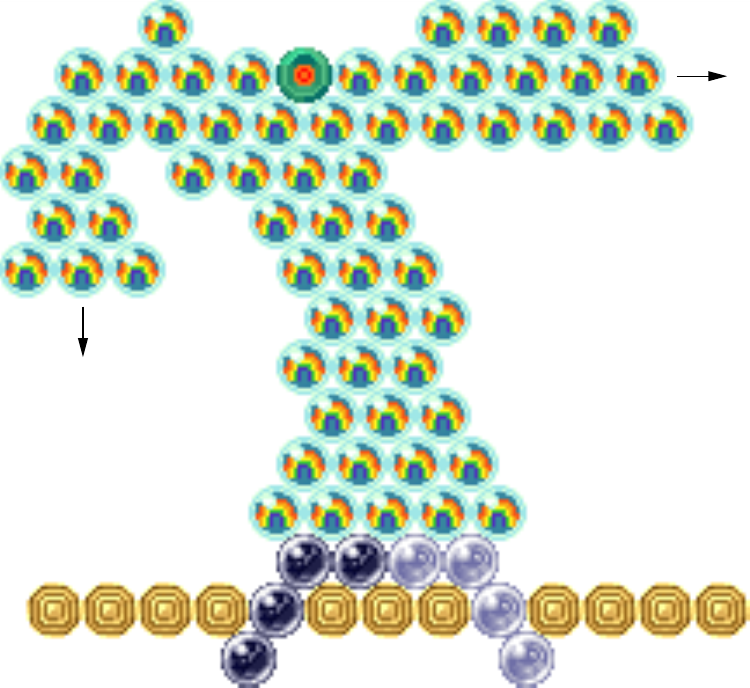}}\qquad\quad
\subfloat[Setting the variable to true. Part of the shield falls.]{
	\label{fpb2}
	\includegraphics[scale=0.65]{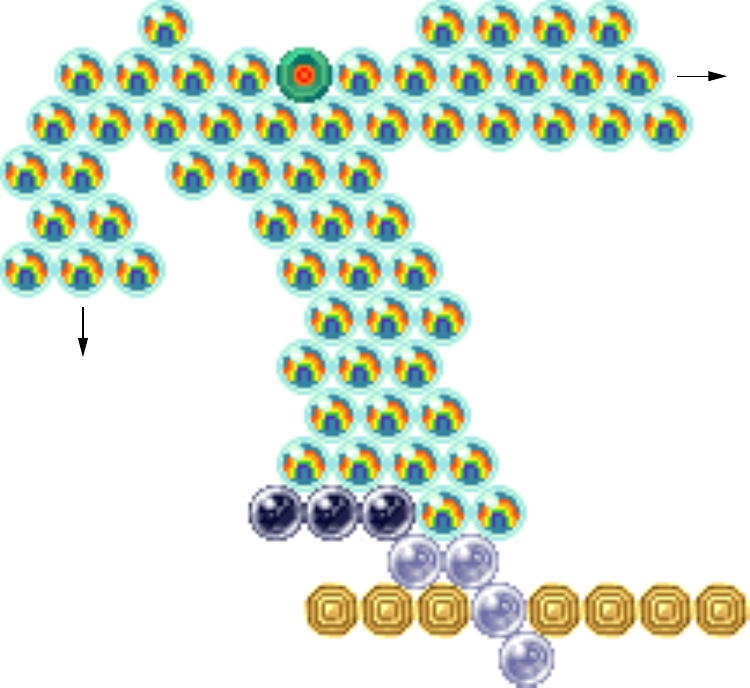}}\\
\subfloat[The fuse is consumed and the shield completely falls.]{
	\label{fpb3}
	\includegraphics[scale=0.65]{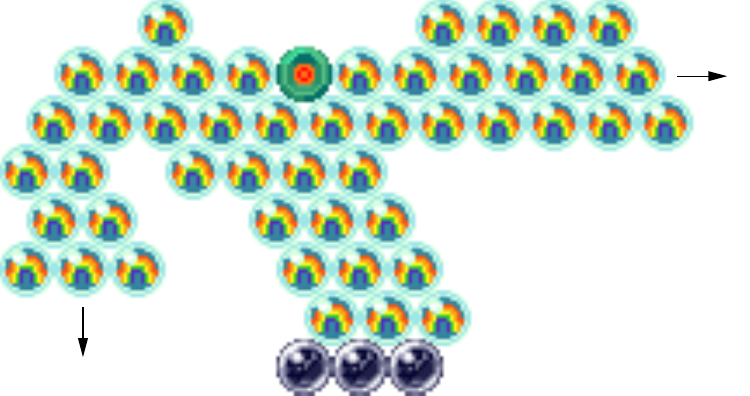}}\qquad\quad
\subfloat[The signal splits and the anchor is eliminated.]{
	\label{fpb4}
	\includegraphics[scale=0.65]{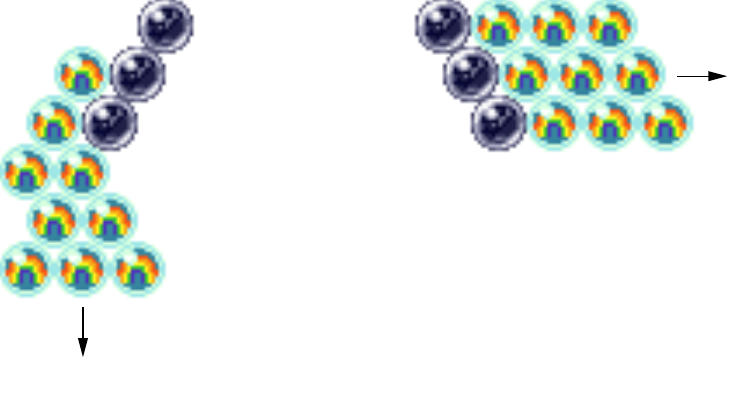}}
\caption{Variable gadget for Puzzle Bobble 3, exemplifying a fork of the fuse.}
\label{fpba}
\end{figure}

Initially, only the bottom variable gadget is exposed, and the player may choose whether to pop the black or the white bubbles, which correspond to opposite truth values. Popping one of the two sets ---say, the black one--- causes three rainbow bubbles to turn black and pop immediately after. This triggers a chain reaction, in which at least three new rainbow bubbles turn black and pop at each step, consuming the fuse and eventually reaching the clause gadgets. At this point, a thin colored \emph{wire} is reached in every clause gadget (see Figure~\ref{fpbb}), which pops if and only if it is black (its color tells whether the corresponding literal in the clause is positive or negative). If it pops, the explosion propagates inside the clause gadget, eliminating the anchor. Notice that the explosion can never ``backfire'' from the clause gadget and consume fuses corresponding to different variables, because each wire is connected to only two rainbow bubbles of its attached fuse.

\begin{figure}[htbp]
\centering
\subfloat[Initial configuration.]{
	\label{fpb5}
	\includegraphics[scale=0.65]{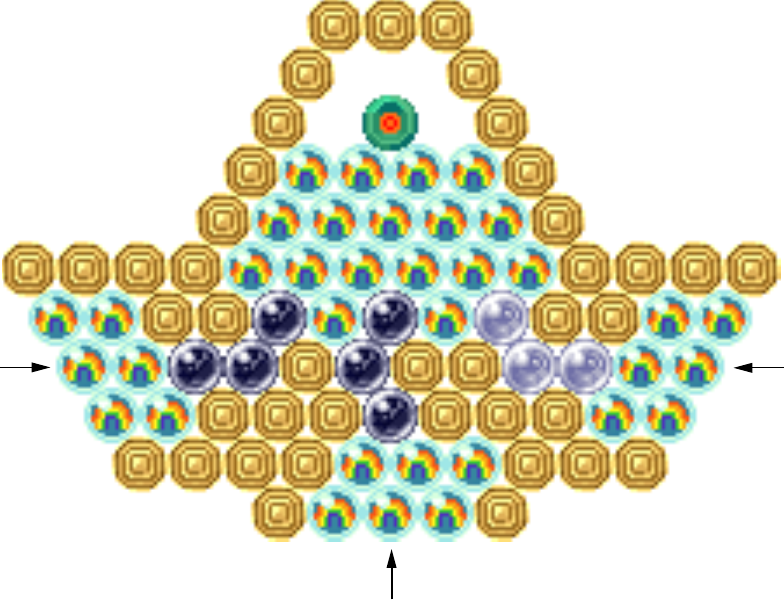}}\qquad\quad
\subfloat[$x$ is set to true.]{
	\label{fpb6}
	\includegraphics[scale=0.65]{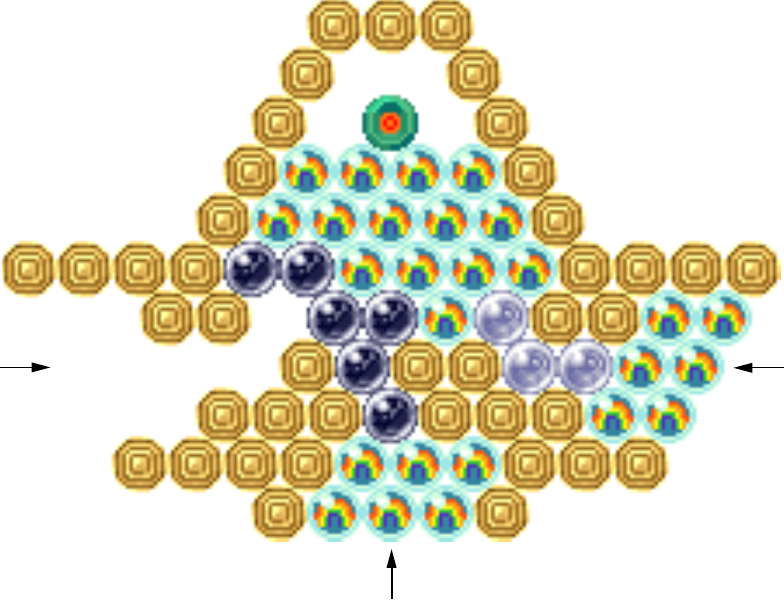}}\\
\subfloat[The signal does not propagate in the $y$ wire.]{
	\label{fpb7}
	\includegraphics[scale=0.65]{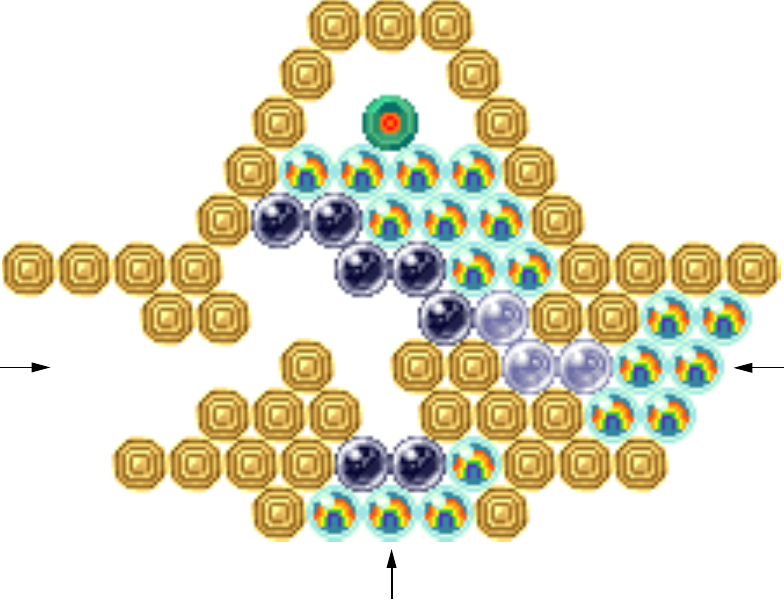}}\qquad\quad
\subfloat[The anchor is eliminated.]{
	\label{fpb8}
	\includegraphics[scale=0.65]{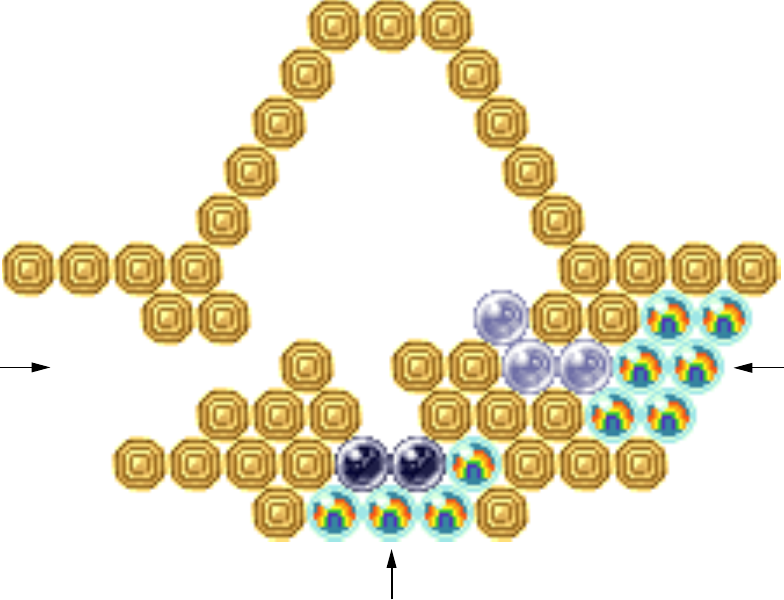}}
\caption{Clause gadget for Puzzle Bobble 3, corresponding to $(x \vee y \vee \neg z)$.}
\label{fpbb}
\end{figure}

When the fuse of the first variable has been consumed, the remaining part of the variable layer falls, including the shield (see Figure~\ref{fpba}). The second variable layer is then exposed, and the process continues until all fuses have been consumed, and all shields have fallen. What eventually remains are the ``unsatisfied'' clause gadgets, whose wires are now impossible to reach, due to the surrounding \emph{sheaths} made of stone blocks. Notice that each variable layer has its own anchor, so its shield does not fall until the variable has been set by the player, even if all the clauses connected to that variable have already been satisfied.

This proves \NP-hardness. Completeness holds under the assumption that the player can always choose the color of his next bubble, which is not far from true in most cases, since bubbles can be either discarded by making them bounce back to the bottom of the screen, or can be stacked somewhere (if done properly, not more than two bubbles per color need to be stacked at once).

\subsection{Skweek (Loriciels, 1989) is \NP-hard}

\paragraph{Game description.}
The player controls a furry ball that has to walk on blue tiles in order to paint them pink, while avoiding monsters. Some tiles are made of ice and do not have to be painted, the avatar slides on them and is unable to change direction until it reaches a different type of tile, or its slide is blocked by a wall. Some blue tiles fall apart when the avatar steps on them, opening a hole in the ground that becomes a deadly area. All the blue tiles have to be painted pink within a time limit in order to finish the level. Several power-ups randomly appear, including an exit door and teddy bears of several colors, which let the player immediately skip the level when collected.

\paragraph{Complexity.}
The decision problem is whether a given level can be completed without losing lives, regardless of the power-ups that may randomly appear. The presence of breakable tiles yields an immediate application of Metatheorem~\ref{m1}. Figure~\ref{fsk} shows how a location is constructed: location traversal is implied by the blue tiles, all of which have to be covered by the avatar. On the other hand, after traversing a path connecting two locations, the cracked tiles break and cannot be accessed again, making it a single-use path. The reason why we use ice tiles is merely that they need not be painted, so the player's purpose is in fact to visit all locations, rather than all paths. For this reason, even though there exists a power-up that prevents the avatar from sliding on ice, our construction still works as intended.

\begin{figure}[htbp]
\centering
\subfloat[Initial configuration, with the avatar approaching from the top path.]{
	\label{fsk1}
	\includegraphics[scale=0.65]{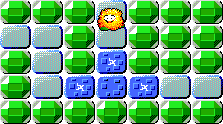}}\qquad\quad
\subfloat[Breaking the top tile and painting the central tile.]{
	\label{fsk2}
	\includegraphics[scale=0.65]{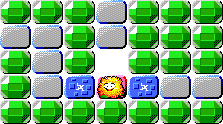}}\\
\subfloat[Reaching the left tile and running back while it breaks.]{
	\label{fsk3}
	\includegraphics[scale=0.65]{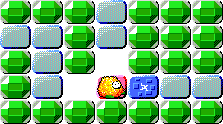}}\qquad\quad
\subfloat[Breaking the last tile and exiting to the right, leaving no blue tile behind.]{
	\label{fsk4}
	\includegraphics[scale=0.65]{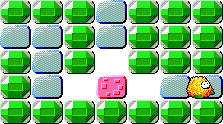}}
\caption{Implementing location traversal and single-use paths in Skweek.}
\label{fsk}
\end{figure}

Proving membership in \NP would be almost straightforward, were it not for a certain type of monster that turns tiles from pink to blue. On top of this, monster behavior is pseudo-random and partly depends on the player's moves. For these reasons, beating a level may conceivably take an exponentially long time, and any proof that the game lies in \NP would have to be carefully crafted.

\subsection{Starcraft (Blizzard Entertainment, 1998) is \NP-hard}

\paragraph{Game description.}
Starcraft is a real-time strategy game in which two or more players have to train an army in order to destroy each other's bases. Two types of resources can be gathered from the environment by special units called \emph{workers}. Resources allow to make new buildings in order to train more units, thus forming an army that can be sent to war. There are three possible \emph{races} to choose, each of which has its unique unit types, each one with different parameters, such as hit points, range, damage, speed, etc. Some units have special abilities, such as becoming invisible, casting offensive or defensive ``spells'', etc. A player loses if and only if all his buildings are destroyed, regardless of the amount of units that he still has, or the amount of resources he gathered.

\paragraph{Complexity.}
Most RTS games are expected to be \EXP-hard, since they involve at least two players, and a match may last an arbitrarily long time. However, a simple \NP-hardness proof can be given via Metatheorem~\ref{m1b}.a. The same reasoning applies, with minor changes, to several RTS games other than Starcraft, such as Warcraft and Age of Empires.

We produce a configuration in a Protoss vs.\ Terran game, in which deciding whether the Protoss player can win or the game is a draw is \NP-hard.

In our setting, the avatar will be a Protoss Probe (i.e., the Protoss race's worker unit). When several Probes will be found in the same area, we will identify one as the avatar, and all the other Probes will represent tokens carried by the avatar.

\begin{figure}[htbp]
\centering
\includegraphics[scale=0.35]{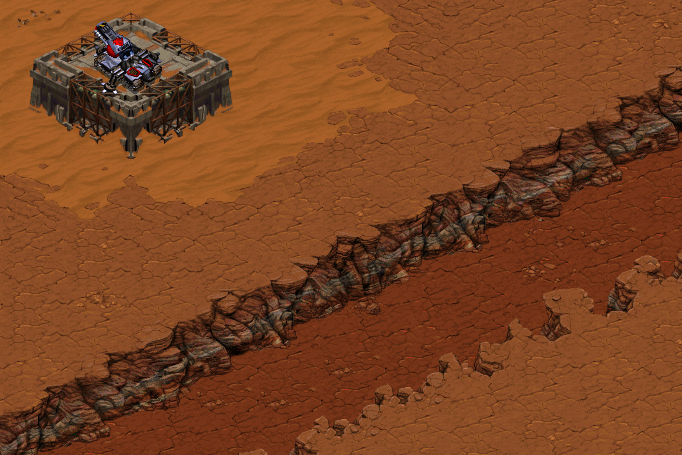}
\caption{Toll road for Starcraft.}
\label{fsc1}
\end{figure}

Consistently with our avatar-token abstraction, we present a toll road in Figure~\ref{fsc1}. If a lone Probe attempts to walk in the canyon, it is destroyed by a single shot of the enemy Siege Tank. But if two Probes walk together, only one can be targeted and destroyed, whereas the second Probe can make it past the Siege Tank while it reloads (the two Probes should stay a couple of tiles away from each other, to avoid splash damage). Paraphrasing, the avatar can traverse the toll road if and only if it is carrying a token. It does not matter which Probe is destroyed, because they are all equivalent. In general, if several Probes attempt to traverse the canyon together, at least one is destroyed, and at least one survives.

Figure~\ref{fsc2} shows how to implement a token lying in some location. The building is a Protoss Nexus, and there is a Probe trapped behind a Mineral Field, worth exactly $100$ Minerals. The Probe can gather up to eight Minerals at once, but then it must bring them to a Nexus before it can gather more Minerals. It follows that the trapped Probe cannot free itself, but it must wait for another Probe to set it free by bringing all the Minerals to the Nexus.

\begin{figure}[htbp]
\centering
\includegraphics[scale=0.35]{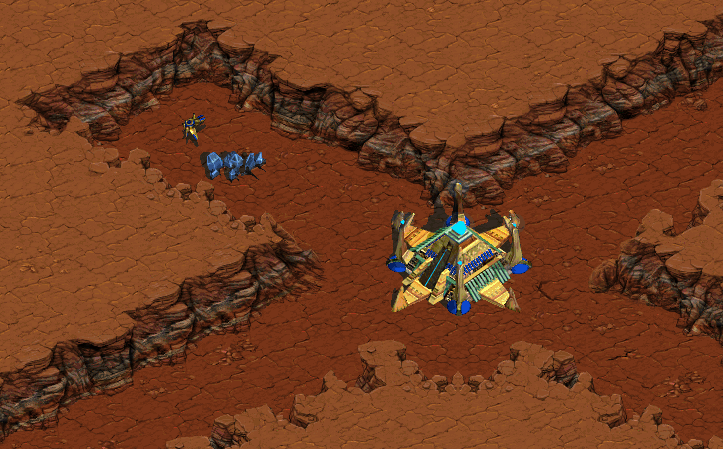}
\caption{Implementing tokens in Starcraft.}
\label{fsc2}
\end{figure}

In our analogy, gathering all the Minerals in a location to set a Probe free corresponds to picking up a token. If now we add a free Probe to the designated starting location, we are effectively placing an avatar there, which must set a new Probe free every time it needs to traverse a toll road.

We still have to enforce location traversal and ensure that no new Probes are trained. Recall from the proof of Metatheorem~\ref{m1b}.a that there are $n-1$ locations with one token, plus a starting location with two tokens, and a final location with no tokens. Therefore, in our generated map, the starting location has two Mineral Fields, and the final location has none. We place a Nexus and a Mineral Field in the final location as well (but no Probe, i.e., no token), so that there are $100(n+2)$ Minerals in total, and at least $100$ Minerals in each location.

In a different part of the map, we place $n+2$ copies of the ``crater'' shown in Figure~\ref{fsc3}. In each crater, there is a Protoss Gateway supported by a Protoss Pylon, and a Terran Supply Depot. This completes our construction.

\begin{figure}[htbp]
\centering
\includegraphics[scale=0.45]{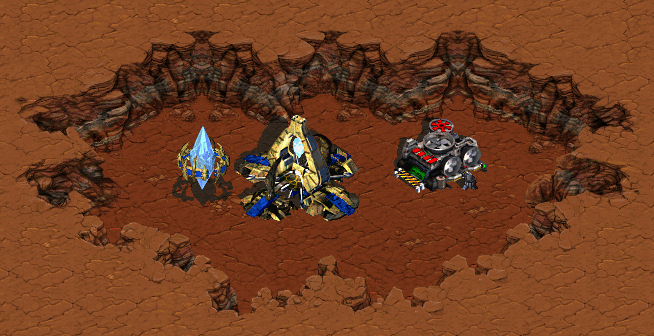}
\caption{A Zealot must be trained to destroy the Supply Depot.}
\label{fsc3}
\end{figure}

It is clear that the Terran player has no way to gather resources or train new units, in that he has no workers, and his only buildings are $n+2$ Supply Depots. Moreover, each Siege Tank is bound to stay on a small platform, and switching from Siege mode to Tank mode has no use, because the shortened attack range would not allow it to hit any Probe.

On the other hand, let us assume that the Protoss player starts with no resources, either. Because there are no Vespene Geysers on the map, and only Minerals are available, there are only two kinds of units that the Protoss player may ever train: Probes from Nexuses and Zealots from Gateways. Because none of them can fly and both are melee units, it follows that a Zealot must be trained from each Gateway in order to destroy the nearby Supply Depot. Since training a Zealot costs 100 Minerals, all the Minerals in every location must be gathered and spent just for training Zealots. Therefore, no additional buildings may be built, and no additional Probes may be trained at the Nexuses. In other terms, just the initial avatar can be used, and all the locations must be reached, which implies the location traversal feature.

\subsection{Tron (Bally Midway, 1982) is \NP-hard}

\paragraph{Game description.}
There are four subgames, one of which is a ``light cycle'' race between the player and several opponents. The race takes place in a rectangular grid whose external boundary is a deadly obstacle. The trail of each light cycle becomes a deadly obstacle as well, hence the safe areas become narrower and narrower as the race progresses. As soon as a light cycle hits an obstacle, it is eliminated, and also its trail is removed from the grid. The goal is to remain the sole survivor in the arena.

\paragraph{Complexity.}
This game becomes \PSPACE-complete if played on abstract graphs~\cite{tron} whereas, for the standard plane grid version, a simple \NP-hardness proof can be given, as an application of Metatheorem~\ref{m1}.

First we properly embed the location-path structure into the grid, and then we show how to actually implement each component with light cycle trails.

We start from a grid-aligned embedding in which each vertex is a square, and paths have unit width. Then we scale the construction up by some large-enough factor, while preserving the size of the vertices, and keeping all the paths of unit width. We do so to make the total area of all vertices negligible compared to the area of a face of the underlying plane graph.

Next we perform the same operation that we did for Pipe Mania (see above): with the same notation, we scale the construction by a factor $k=\Theta(n)$, so that the resulting combined path length $l'$ is negligible with respect to the area of a single vertex $A$. (In contrast with Pipe Mania, though, the starting vertex counts as a regular vertex and its area does not contribute to $l$ or $l'$.)

\begin{figure}[htbp]
\centering
\includegraphics[scale=0.45]{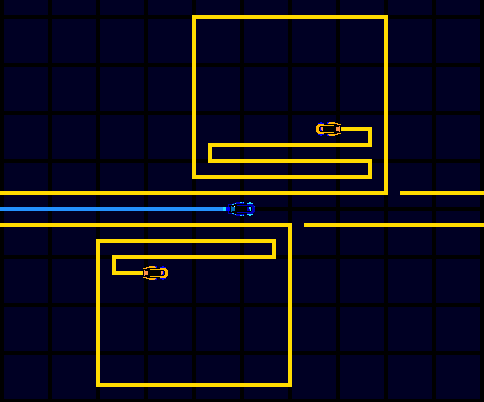}
\caption{Sketch of a single-use path for Tron.}
\label{ftr}
\end{figure}

Then we proceed by implementing paths and locations, as sketched in Figure~\ref{ftr}. Each opponent light cycle is responsible for drawing the border of a face of the plane graph underlying our construction (including the outer face), which is a grid-aligned polygon. When a light cycle is done drawing and meets its own trail again, it turns around and ``traps'' itself in a rectangle of area $(n+1)A$ (or slightly smaller) inside the polygon it just outlined. This rectangle necessarily fits somewhere in the polygonal face, by the first step of the above construction. In Figure~\ref{ftr} we see a path, traversed by the player's light cycle, which is bordered by two faces, the upper face arguably having a smaller perimeter than the lower one (because the upper rectangle is bigger, and a larger part of it has been covered).

While paths and vertices are constructed, we assume that the player ``waits'' by covering a small square in the starting vertex. This is feasible, because the perimeter of any face is much smaller than the area of a vertex, by construction. Then the actual race starts, and the player has to cover enough locations to survive longer than his opponents. Paths are obviously single-use, because they have unit width, and the player's trail is an obstacle even for the player himself. Location traversal is implied by the fact that the player's light cycle must cover at least a length of slightly less than $(n+1)A$, so it must visit all vertices.

\end{document}